\documentclass[11pt]{article}

\usepackage{arxiv}
\usepackage{cite}
\usepackage{amsmath,amssymb,amsfonts}
\usepackage{amsthm,mathtools}
\usepackage{bm}
\usepackage{graphicx}
\usepackage{textcomp}
\usepackage{xcolor}
\usepackage{braket}
\usepackage{array,booktabs}
\usepackage{subcaption}
\usepackage{algorithm}
\usepackage{algpseudocode}
\usepackage{xurl}
\usepackage{microtype}
\usepackage{placeins}
\usepackage{caption}
\usepackage{float}
\usepackage{pifont}
\usepackage{tikz}
\usetikzlibrary{shapes.geometric, arrows, positioning, fit, backgrounds}
\usepackage[hidelinks]{hyperref}

\newtheorem{theorem}{Theorem}

\newtheorem{corollary}{Corollary}
\newtheorem{definition}{Definition}
\newtheorem{example}{Example}
\newtheorem{remark}{Remark}

\DeclarePairedDelimiter{\norm}{\lVert}{\rVert}
\DeclarePairedDelimiter{\abs}{\lvert}{\rvert}
\renewcommand{\norm}[1]{\norm*{#1}}
\renewcommand{\abs}[1]{\abs*{#1}}

\DeclareGraphicsExtensions{.pdf,.png,.jpg,.eps}
\graphicspath{{figuras/}{./}}

\newcommand{\safeincludegraphics}[2][]{%
	\IfFileExists{#2}{%
		\includegraphics[#1]{#2}%
	}{%
		\begingroup
		\setlength\fboxsep{2pt}%
		\color{red}\fbox{\texttt{missing: #2}}%
		\endgroup
	}%
}

\providecommand{\adj}{\dagger}

\title{Digital Twin--Driven Adaptive Wavelet Strategy for Efficient 6G Backbone Network Telemetry}

\author{
	Alexandre Barbosa de Lima\\
	Pontifical Catholic University of S\~ao Paulo, Brazil\\
	\texttt{ablima@pucsp.br}
	\and
	Xavier Hesselbach\\
	Universitat Polit\`ecnica de Catalunya, Spain\\
	\texttt{xavier.hesselbach@upc.edu}
	\and
	Jos\'e Roberto de Almeida Amazonas\\
	University of S\~ao Paulo, Brazil\\
	\texttt{jose.amazonas@usp.br}
}

\date{}

\begin{document}
	
\maketitle
	
\begin{abstract}
Classical orthogonal wavelets guarantee perfect reconstruction but rely on fixed bases optimized for polynomial smoothness, achieving suboptimal compression on signals with fractal spectral signatures. Conversely, learned methods offer adaptivity but typically enforce orthogonality via soft penalties, sacrificing structural guarantees.
		
This work establishes a rigorous equivalence between Multiscale Entanglement Renormalization Ansatz (MERA) tensor networks and paraunitary filter banks. The resulting framework learns adaptive wavelets while enforcing exact orthogonality through manifold-constrained optimization, guaranteeing perfect reconstruction and energy conservation throughout training.
		
Validation on Long-Range Dependent (LRD) network traffic demonstrates that learned filters outperform classical wavelets by 0.5--3.8~dB PSNR on six MAWI backbone traces (2020--2025, 314~Mbps--1.75~Gbps) while preserving the Hurst exponent within estimation uncertainty ($|\Delta H| \le 0.03$). These results establish MERA-inspired wavelets as a principled approach for telemetry compression in 6G digital twin synchronization.
\end{abstract}
	
\noindent\textbf{Keywords:}
Digital twin synchronization,
adaptive wavelets,
semantic telemetry,
6G networks,
long-range dependence,
paraunitary filter banks,
network telemetry compression.
	
		
\section{Introduction}
\label{sec:intro}
Network traffic in future 6G systems is expected to exhibit long-range dependence (LRD), characterized by power-law correlation decay: $\phi(k) \sim k^{-\beta}$, $0 < \beta < 1$~\cite{Leland1994,Paxson1994,Millan2021LRD}. This fractal structure has profound implications for network management: buffer overflow probabilities decay polynomially  -- not exponentially -- with buffer size, fundamentally challenging classical queueing models~\cite{Norros1995,Parulekar1996}. For digital twins (DT) driving closed-loop optimization, capturing these self-similar dynamics is not merely a statistical exercise but a prerequisite for stability. As highlighted in surveys on machine learning for networking~\cite{Boutaba2018,Aouedi2025DLTraffic}, these dynamics call for adaptive multiscale representations capable of capturing traffic correlations across scales while preserving physical interpretability and robustness.

This work focuses on backbone aggregation telemetry, where traffic from thousands of edge cells converges into high-capacity core links. Statistical aggregation theorems establish that superposition of heterogeneous sources with heavy-tailed distributions preserves or amplifies LRD at such aggregation scales~\cite{taqqu1997proof,willinger1998self}, making backbone traces a natural testbed for validating LRD-preserving transforms. This setting motivates the development of adaptive wavelets that exploit traffic-specific correlation structures while maintaining the mathematical guarantees (perfect reconstruction, energy conservation) required for reliable signal processing. While 6G networks will encompass heterogeneous access technologies -- from millimeter-wave massive MIMO to satellite non-terrestrial networks -- wireless edge telemetry, with its distinct statistical properties induced by channel fading and mobility, represents a complementary challenge identified as future work (Section~\ref{sec:conclusion}).

While the focus of this study is on backbone aggregation telemetry rather than access-level radio traffic, this choice is deliberate. Digital twin synchronization fundamentally depends on preserving the statistical invariants of aggregated traffic flows -- most notably LRD -- which arise from multiplexing heterogeneous sources and persist independently of the underlying access technology (4G, 5G, or beyond). As such, backbone telemetry provides a technology-agnostic and representative testbed for validating synchronization-preserving compression mechanisms.

The discrete wavelet transform (DWT) remains a cornerstone for multiscale analysis of LRD traffic~\cite{Daubechies1992,Abry1998,Mallat2009}. Conventional orthonormal wavelets provide the mathematical safety net required for control systems -- offering perfect reconstruction (PR)\footnote{PR is the property that $\hat{x}[n] = x[n]$ with no aliasing or distortion.} and Parseval energy conservation. However, they rely on fixed, a priori designed filter banks (e.g., Haar, Daubechies) that cannot adapt to the evolving correlation structure of real network traffic, leading to suboptimal compression. Conversely, recent machine learning approaches introduce data-driven adaptability through neural or statistical representations~\cite{Boutaba2018,Szostak2021,Iraj2022,Aouedi2025DLTraffic}. Yet, these ``black-box'' methods often relax structural guarantees, introducing approximation errors that can lead to unpredictable behavior under perturbations or resource constraints.

This tension between data-driven adaptability and mathematical rigor defines the central challenge for application-aware telemetry: \emph{can we learn multiscale representations that remain orthonormal and provably stable while adapting to the statistics of network traffic?} Existing learned transforms generally fall into three categories: (i) unconstrained models that abandon orthogonality for flexibility, losing PR guarantees~\cite{wang2018multilevel, khan2018learning}; (ii) soft-constrained approaches that enforce properties via loss penalties, which hold only approximately and require laborious hyperparameter tuning~\cite{ha2021adaptive, wolter2021adaptive}; or (iii) structural methods that impose conjugate quadrature filter (CQF) constraints~\cite{michau2022fully} but do not guarantee exact orthogonality at intermediate training steps. Such approximations are insufficient for mission-critical DT: Parseval violations corrupt energy budgets, and imperfect reconstruction distorts the traffic's LRD signature, degrading the twin's predictive stability. Lezcano-Casado and Martínez-Rubio~\cite{lezcano2019cheap} do maintain exact orthogonality via exponential parametrization, but their framework targets recurrent neural networks (RNN) for temporal sequence modeling rather than multiscale signal decomposition. Although neural autoencoders achieve impressive rate-distortion on generic signals, they lack the interpretable multiscale structure and LRD-preservation guarantees that network state synchronization requires.

This work answers affirmatively by introducing a manifold-constrained optimization scheme where the framework enforces orthogonality at every
training iteration through polar projection onto the orthogonal manifold~\cite{sato2021,boumal2023,fei2025}, ensuring PR and Parseval energy conservation hold throughout learning. The mathematical foundation draws from the multiscale entanglement renormalization ansatz (MERA)~\cite{Vidal2007ER,Orus2014} -- a hierarchical tensor network (TN) from quantum many-body physics, reformulated here as a trainable cascade of local $2{\times}2$ orthogonal transformations. While the physics literature has treated wavelets as a mathematical analogy when synthesizing quantum states~\cite{Evenbly2016,Haegeman2018}, this work inverts the paradigm: by imposing constraints on MERA tensors and interpreting the resulting decomposition as a learnable filter bank, Theorem~\ref{thm:circuit_pu} (Section~\ref{sec:equivalence-PUFB}) establishes that MERA layers are mathematically equivalent to two-channel paraunitary filter banks at every decomposition level. This equivalence is not approximate or asymptotic -- it holds exactly, enabling a framework that unifies data-driven adaptability with the mathematical rigor necessary for reliable closed-loop operation. With polar projection ensuring orthogonality at every training iteration, the framework guarantees energy preservation and invertibility at all scales while retaining full adaptability to data statistics. When used for compression, it yields interpretable rate-distortion trade-offs by retaining a fraction $\rho$ of coefficients, offering empirical validation of its energy compaction properties.

Having motivated the need for adaptive wavelets with structural guarantees, the main contributions of this work are summarized as follows:
\begin{itemize}
	\item A formal equivalence between MERA TN and orthonormal paraunitary wavelet filter banks is established (Theorem~\ref{thm:circuit_pu}), bridging concepts from quantum many-body theory and multirate signal processing.
	
	\item A learning framework operating directly on the Stiefel manifold $\mathcal{O}(2)$ (the group of $2 \times 2$ orthogonal matrices) is introduced, enforcing PR and Parseval energy preservation via polar projection at every iteration. This eliminates the approximation errors inherent to soft-penalty methods~\cite{wang2018multilevel,wolter2021adaptive} and ensures orthogonality throughout training, unlike CQF-based approaches~\cite{michau2022fully} that allow intermediate coefficient drift or exponential parametrizations~\cite{lezcano2019cheap} designed for RNN stability.
	
	\item Experimental validation on six real-world backbone traffic traces spanning 2020--2025 (314~Mbps--1.75~Gbps) demonstrates 0.5--3.8~dB Peak Signal-to-Noise Ratio (PSNR) gains over fixed wavelet bases while preserving Hurst exponents within 95\% confidence intervals at 90\% compression, establishing superior LRD retention.
\end{itemize}

\paragraph*{Roadmap} 
The remainder of this paper is organized as follows:
\begin{itemize}
	\item \textbf{Section~\ref{sec:dtwins}}: DT telemetry compression -- requirements, bottleneck analysis, and scope.
	
	\item \textbf{Section~\ref{sec:background}}: Mathematical foundations.
	
	\item \textbf{Section~\ref{sec:mera-wavelet}}: MERA-inspired wavelet architecture.
	
	\item \textbf{Section~\ref{sec:equivalence-PUFB}}: Equivalence to paraunitary filter banks (Theorem \ref{thm:circuit_pu}).
	
	\item \textbf{Section~\ref{sec:learning-framework}}: Learning framework with manifold optimization.
	
	\item \textbf{Section~\ref{sec:experiments}}: Validation on real backbone traces.
	
	\item \textbf{Section~\ref{sec:conclusion}}: Conclusion and future directions.
\end{itemize}

\section{Digital Twin Synchronization: A Layered Perspective}
\label{sec:dtwins}

This section establishes the context and requirements that motivate the proposed framework. Fig.~\ref{fig:layered-architecture} illustrates the layered architecture, highlighting the telemetry compression layer addressed by this work.

\subsection{The Network Digital Twin Paradigm}
\label{sec:dt_paradigm}

Originally proposed in ~\cite{Grieves} as a digital representation to support the design and development of manufactured components, the concept of a Digital Twin (DT) has evolved into an essential component towards the next generation networks and services. The DT paradigm refers to the construction of continuously updated digital counterparts capable of mirroring the behavior and state of physical entities (or even purely virtual, or a physical-virtual hybrid entity). DT methodologies have been extended to the domain of communication infrastructures, emerging the Digital Twin Network (DTN) architectures ~\cite{zhou-nmrg-digitaltwin-network-concepts-00} and the Network Digital Twin (NDT) instances ~\cite{irtf-nmrg-network-digital-twin-arch-10}. An NDT constitutes a virtualized replica of an Original Network (ON), whether physical or virtual, and remains tightly coupled with it through  information exchange that enables near real-time state synchronism. Therefore, NDTs are able to support advanced functionalities such as online simulation, network design,  optimization, and AI-driven control and orchestration mechanisms, which can be exploited by the ON to enhance operational efficiency and overall performance. Moreover, the flexible and interoperable design of NDTs makes them suitable for deploying new network services. An NDT does not require dedicated physical equipment to be realized, but it can be instantiated either on specific computing resources or through virtualized resources and service infrastructures. Fixed hardware solutions can guarantee an exact replica of the original, at the same cost. In comparison, virtualization-based approaches enable significantly more dynamic control and allow the NDT to be tailored more easily to the requirements, and usually with a reduced cost.

Network Digital Twins represent an emerging paradigm for 6G network management, where a dynamic virtual replica of the physical infrastructure enables simulation-based optimization, capacity planning, and ``what-if'' analysis before deploying changes to production systems~\cite{Hesselbach2025DTN, Tariq2022, Kuruvatti2022, Wang2025}. The DT continuously ingests telemetry data from the physical network -- capturing traffic characteristics, queue states, and resource utilization -- to maintain synchronization between the virtual model and real-world dynamics~\cite{Tariq2022}.

Typically, a single Digital Twin is associated with an Original, say 1:1. However, multiple Digital Twins can also be instantiated in parallel (say 1:N), forming a DT farm in which each instance can be focused on analyzing different aspects of the system or evaluating alternative strategies in order to be compared. So, a DT farm enables the distribution of analytical tasks across several specialized DT instances. Each Digital Twin can be tailored to focus on a particular target. DT farms allow the analysis and comparison of alternative strategies under identical baseline conditions. Because all DTs originate from the same synchronized state of the original, their outcomes can be compared without affecting the source, reducing risks and accelerating the analysis results.

From a DT perspective, backbone telemetry constitutes the dominant synchronization bottleneck, as it aggregates traffic originating from radio, edge, and core domains into a unified stochastic process. As a result, distortions introduced at this layer propagate directly into the virtual model, affecting the fidelity of downstream simulation and optimization tasks.

The effectiveness of a DT hinges on \emph{synchronization fidelity}: the degree to which the virtual model accurately reflects the statistical and temporal properties of the physical network. This fidelity directly impacts the reliability of simulations used for critical decisions such as congestion control, routing optimization, and service-level agreement (SLA) enforcement.

High-fidelity synchronization requires continuous telemetry ingestion at temporal resolutions sufficient to capture traffic dynamics ranging from millisecond-scale microbursts to hour-scale session patterns. The backbone traces employed in this work (Section~\ref{sec:experiments}, Table~\ref{tab:mawi_traces_summary}) illustrate typical data volumes: at 1~ms sampling granularity, monitoring hundreds of concurrent links can generate gigabytes of raw telemetry per collection cycle\footnote{For example, 1~ms sampling over 15-minute windows yields ${\sim}9 \times 10^5$ samples per trace. At 64-bit precision, a single link generates ${\sim}$7.2~MB per interval; scaling to 200~links produces ${\sim}$1.4~GB per cycle~\cite{Kuruvatti2022}.}. While such overhead is negligible in overprovisioned core networks, it becomes relevant in bandwidth-constrained scenarios such as satellite backhaul or disaggregated RAN fronthaul~\cite{Wang2025}.

Beyond volume reduction, a more fundamental requirement is \emph{statistical fidelity}: compression schemes must preserve the invariants that govern network performance models -- most critically, the LRD structure of traffic (Section~\ref{sec:lrd_requirement}). This motivates the development of adaptive transforms that maintain structural guarantees while exploiting signal-specific correlations.

\subsection{Technical Requirements and Scope Delimitation}
\label{sec:scope}

This work addresses the \emph{telemetry compression layer} (Fig.~\ref{fig:layered-architecture}) within the broader DT synchronization pipeline. The contribution is a signal processing solution that operates on time-series telemetry streams (e.g., byte-rate samples at 1~ms granularity) and produces compressed representations suitable for transmission to DT infrastructure. The compression method must simultaneously achieve:

\begin{enumerate}
	\item \textbf{Rate-Distortion Efficiency:} Maximize reconstruction fidelity (PSNR) under bandwidth constraints.
	
	\item \textbf{Statistical Fidelity:} Preserve the Hurst exponent $H$ within estimator confidence intervals, ensuring that decompressed telemetry retains the LRD structure necessary for accurate queueing analysis.
	
	\item \textbf{Structural Guarantees:} Provide PR and Parseval energy conservation for predictable, deterministic behavior in mission-critical operations.
\end{enumerate}

\begin{figure}[t]
	\centering
	\begin{tikzpicture}[
		node distance=0.3cm,
		layer/.style={
			rectangle, 
			draw=black, 
			minimum width=0.75\columnwidth,  
			minimum height=1.1cm,
			text centered,
			font=\small,
			text width=0.85\columnwidth,
			align=center
		},
		highlight/.style={
			layer,
			fill=blue!15,
			draw=blue!60,
			line width=1.5pt
		}
		]
		
		\node[layer, fill=gray!10] (app) {
			\textbf{Digital Twin Application Layer}\\[1pt]
			{\footnotesize Simulation, Optimization, What-if Analysis}
		};
		
		\node[layer, fill=gray!10, below=of app] (sync) {
			\textbf{DT Synchronization Layer}\\[1pt]
			{\footnotesize State Updates, Model Calibration}
		};
		
		\node[highlight, below=of sync] (compress) {
			\textbf{Telemetry Compression Layer} {\small\textit{(This Work)}}\\[1pt]
			{\footnotesize MERA-Wavelet Codec: Adaptive, LRD-Preserving}
		};
		
		\node[layer, fill=gray!10, below=of compress] (collect) {
			\textbf{Data Collection Layer}\\[1pt]
			{\footnotesize In-band Network Telemetry, Streaming Telemetry}
		};
		
		\node[layer, fill=gray!10, below=of collect] (physical) {
			\textbf{Physical Network Layer}\\[1pt]
			{\footnotesize Routers, Switches, Links, Queues}
		};
		
		\draw[->, >=stealth, thick] (physical.north) -- (collect.south);
		\draw[->, >=stealth, thick] (collect.north) -- (compress.south);
		\draw[->, >=stealth, thick, blue!60, line width=1.5pt] (compress.north) -- (sync.south);
		\draw[->, >=stealth, thick] (sync.north) -- (app.south);
		
		\draw[->, >=stealth, thick, dashed] (app.west) -- ++(-0.7,0) |- (physical.west);
		
	\end{tikzpicture}
	\caption{Layered architecture for NDT synchronization. The telemetry compression layer (highlighted) provides the interface between raw network measurements and the virtual model. This work contributes the adaptive MERA-wavelet codec operating at this layer. }
	\label{fig:layered-architecture}
\end{figure}
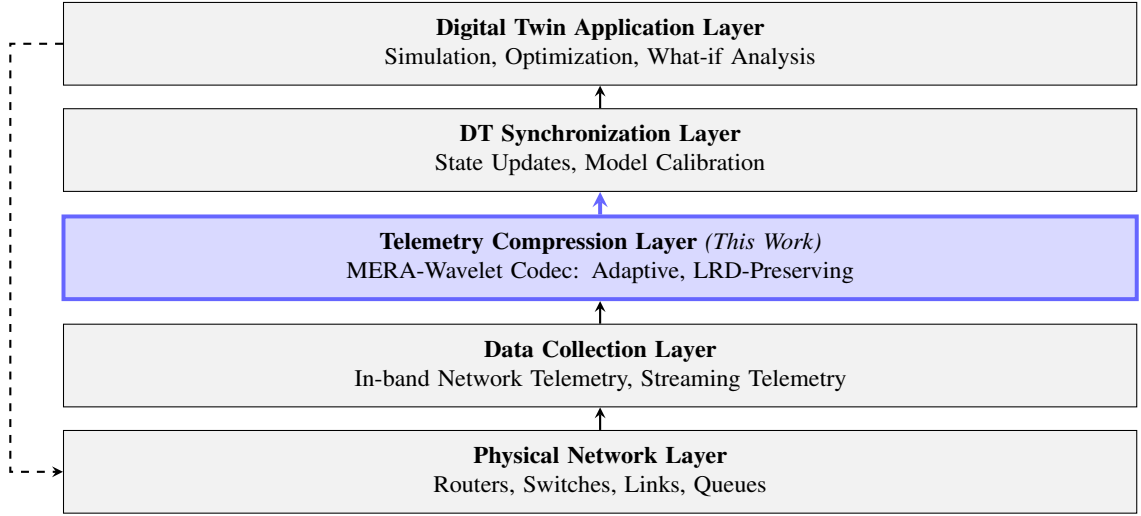

\paragraph{Standard Wavelets vs.\ Learned Approaches} Classical orthogonal wavelets (Haar, Daubechies, Coiflets) satisfy requirement~(3) through their paraunitary properties but achieve suboptimal performance on requirements~(1) and~(2) due to fixed filter designs optimized for polynomial smoothness rather than power-law correlations. Conversely, fully learned neural approaches (e.g., autoencoders) may excel at~(1) but sacrifice~(3) by relaxing orthogonality constraints, introducing approximation errors incompatible with safety-critical DT applications.

The proposed MERA-inspired adaptive wavelets reconcile this trade-off by learning filter banks from data while maintaining paraunitary guarantees through manifold-constrained optimization (Section~\ref{sec:learning-framework}). By adapting to the specific spectral characteristics of backbone traffic, the method achieves superior energy compaction (up to 3.8~dB PSNR gain over fixed wavelets, Section~\ref{sec:experiments}) while preserving $H$ within 95\% confidence intervals at 90\% compression (Section~\ref{sec:statistical_fidelity}).

\paragraph{Architectural Positioning} The codec is DT-agnostic: it interfaces with any simulator or emulator that ingests traffic time series as input -- including discrete-event simulators (e.g., NS-3), queueing models, fluid-flow approximations, or hardware-in-the-loop testbeds (e.g., MININET). Rather than prescribing a particular DT architecture, it provides a reusable compression module that preserves the statistical properties required across diverse modeling frameworks.

\paragraph{Validation Strategy} Following standard practice in source coding research -- where video codecs are validated using rate-distortion metrics on benchmark datasets without implementing full streaming protocol stacks -- the proposed codec is validated using PSNR (rate-distortion efficiency) and Hurst exponent deviation $|\Delta H|$ (statistical fidelity) on six years of MAWI backbone traces~\cite{mawi_archive} spanning 314~Mbps--1.75~Gbps with $H \in [0.77, 0.93]$. Baselines include classical orthogonal wavelets: Haar, Daubechies-4, Coiflet-3, Symmlet-8, and Biorthogonal-4.4.

This validation demonstrates that compressed telemetry retains the statistical properties necessary for downstream DT models, independent of specific simulator implementations. Integration with full DT frameworks -- including topology modeling, routing protocols, and closed-loop control -- represents important future work (Section~\ref{sec:conclusion}) but falls outside the scope of this signal processing contribution.

\section{Mathematical Background}
\label{sec:background}

This section establishes the mathematical foundations underlying the proposed framework: orthogonal transformations that preserve signal energy, the MERA TN architecture that organizes these transformations hierarchically, and the Stiefel manifold on which constrained optimization is performed to maintain orthogonality throughout learning.

\subsection{Unitary and Orthogonal Transformations}
\label{subsec:unitary}

\begin{definition}[Unitary Transformation]\label{def:unitary}
	A linear operator $U: \mathcal{H} \to \mathcal{H}$ on a complex inner product space $\mathcal{H}$ is \emph{unitary} if it preserves inner products:
	$\langle Ux, Uy \rangle = \langle x, y \rangle$ for all $x, y \in \mathcal{H}$,
	equivalently $U^\dagger U = I$, where ${}^\dagger$ denotes the conjugate transpose (Hermitian adjoint).
\end{definition}

\begin{definition}[Orthogonal Transformation]\label{def:orthogonal}
	For real-valued spaces $\mathcal{H} = \mathbb{R}^n$, a matrix $U \in \mathbb{R}^{n \times n}$ is orthogonal if $U^\dagger U = I$. The set of all such matrices forms the orthogonal group $\mathcal{O}(n) = \{U \in \mathbb{R}^{n \times n} \mid U^\dagger U = I\}$.
\end{definition}

\begin{remark}[Notational Convention]\label{rem:dagger-notation}
	The dagger symbol $A^\dagger$ denotes the conjugate transpose, following conventions in quantum TN~\cite{Vidal2008,Evenbly2016}. For real matrices, $A^\dagger = A^\mathsf{T}$. This notation is retained throughout to emphasize the structural connection to MERA formalism, while acknowledging that $U^\dagger = U^\mathsf{T}$ in the real-valued implementation ($U_\ell \in \mathcal{O}(2)$).
\end{remark}

Orthogonality ensures three critical properties for adaptive wavelets: (i)~energy conservation via the Parseval identity, (ii)~PR through $U^\dagger$, and (iii)~numerical stability under composition ($\|U\| = 1$). These guarantees are maintained throughout optimization via polar projection onto $\mathcal{O}(2)$ (Section~\ref{sec:learning-framework}), distinguishing the proposed framework from approaches where orthogonality is imposed only approximately~\cite{wang2018multilevel,michau2022fully}.

\subsection{MERA Tensor Networks}
\label{subsec:mera}

MERA TN were introduced by Vidal~\cite{Vidal2007ER} to efficiently represent quantum systems exhibiting scale-invariant correlations with power-law decay -- a property that directly parallels LRD in network traffic. MERA organizes computation into hierarchical layers, each applying:
\begin{enumerate}
	\item \textbf{Disentanglers:} local unitary transformations removing short-range correlations before coarse-graining.
	\item \textbf{Isometries:} linear maps satisfying $U^\dagger U = I$ that reduce degrees of freedom (typically by factor two) while preserving large-scale structure.
\end{enumerate}

This alternating disentangle--coarsen procedure across $L$ layers directly parallels dyadic wavelet decomposition: each MERA layer corresponds to a resolution level, with isometries playing the role of analysis filters. As shown by Reyes and Stoudenmire~\cite{Reyes2021}, MERA can learn hierarchical correlations across resolutions, bridging quantum renormalization with deep-learning principles. Section~\ref{sec:equivalence-PUFB} formalizes the equivalence between MERA layers and paraunitary filter banks (Theorem~\ref{thm:circuit_pu}), enabling adaptive wavelets with exact PR and energy conservation guarantees.

\subsection{The Stiefel Manifold}
\label{subsec:stiefel_background}

The orthogonality requirements of paraunitary filter banks frame learning as a constrained optimization problem on smooth manifolds. Specifically, the Stiefel manifold $\mathrm{St}(n,k)$ is
defined as the set of matrices with orthonormal columns:
\begin{equation}
	\mathrm{St}(n, k) = \{ U \in \mathbb{C}^{n \times k} \mid U^\dagger U = I_k \}.
\end{equation}

Standard optimizers (SGD, Adam) compute updates in ambient Euclidean space.
A linear update
\begin{equation}
	\widetilde{U}_{t+1} = U_t - \eta \nabla \mathcal{L}(U_t)
\end{equation}
generally violates orthogonality, since the gradient may contain a nonzero
component normal to the manifold. To maintain structural integrity, the
Euclidean step is followed by a polar retraction~\cite{absil2008}:
\begin{equation}
	U_{t+1}
	=
	\mathcal{R}(\widetilde{U}_{t+1})
	=
	\widetilde{U}_{t+1}
	\left( \widetilde{U}_{t+1}^\dagger \widetilde{U}_{t+1} \right)^{-1/2},
\end{equation}
which projects onto $\mathrm{St}(n,k)$, ensuring exact orthogonality at every
iteration (Section~\ref{sec:learning-framework}).

It is emphasized that the architecture considered in this work corresponds to a disentangler-free, tree-structured MERA-inspired network, rather than a full MERA in the strict tensor-network-theoretic sense. This restriction is deliberate, as it preserves the exact paraunitary structure required for perfect reconstruction.

\section{Mera-Inspired Wavelet Architecture}
\label{sec:mera-wavelet}

This section introduces a unified framework for adaptive orthonormal wavelets tailored to LRD network traffic. The key idea is to reinterpret the MERA architecture as structured parameterizations of paraunitary filter banks, thereby enabling learnable multiscale representations that remain orthonormal by design. Section \ref{subsec:need_adaptive} motivates the need for adaptive multiscale models in the presence of LRD traffic, while Section~\ref{subsec:mera_layers} introduces the MERA-inspired orthogonal layers (Definition~\ref{def:mera_layer}) that provide the architectural foundation.

\subsection{Adaptive Multiscale Models for LRD Traffic}
\label{subsec:need_adaptive}

An accurate characterization of the Hurst exponent $H$ from traffic measurements is critical for capacity planning, queueing analysis, and DT synchronization in 6G systems. The DWT provides a natural framework for both analyzing and representing LRD signals through multiresolution decomposition, whose hierarchical structure directly mirrors the scale-invariant correlation patterns characteristic of fractal network traffic. Fig.~\ref{fig:mra} illustrates the Mallat pyramid: at each scale $\ell$ ($\ell=1,2,3$), the signal is recursively split into approximation coefficients $a_{\ell}$ (low-pass filter $g$) and detail coefficients $d_{\ell}$ (high-pass filter $h$), with downsampling by two ($\downarrow 2$) at each stage. This dyadic decomposition not only aligns naturally with the self-similar structure of LRD processes but also enables robust Hurst exponent estimation via wavelet variance scaling~\cite{Abry1998}, making wavelets the standard tool for LRD traffic analysis and compression.

\begin{figure}[!ht]
	\centering
	\includegraphics[width=0.6\columnwidth]{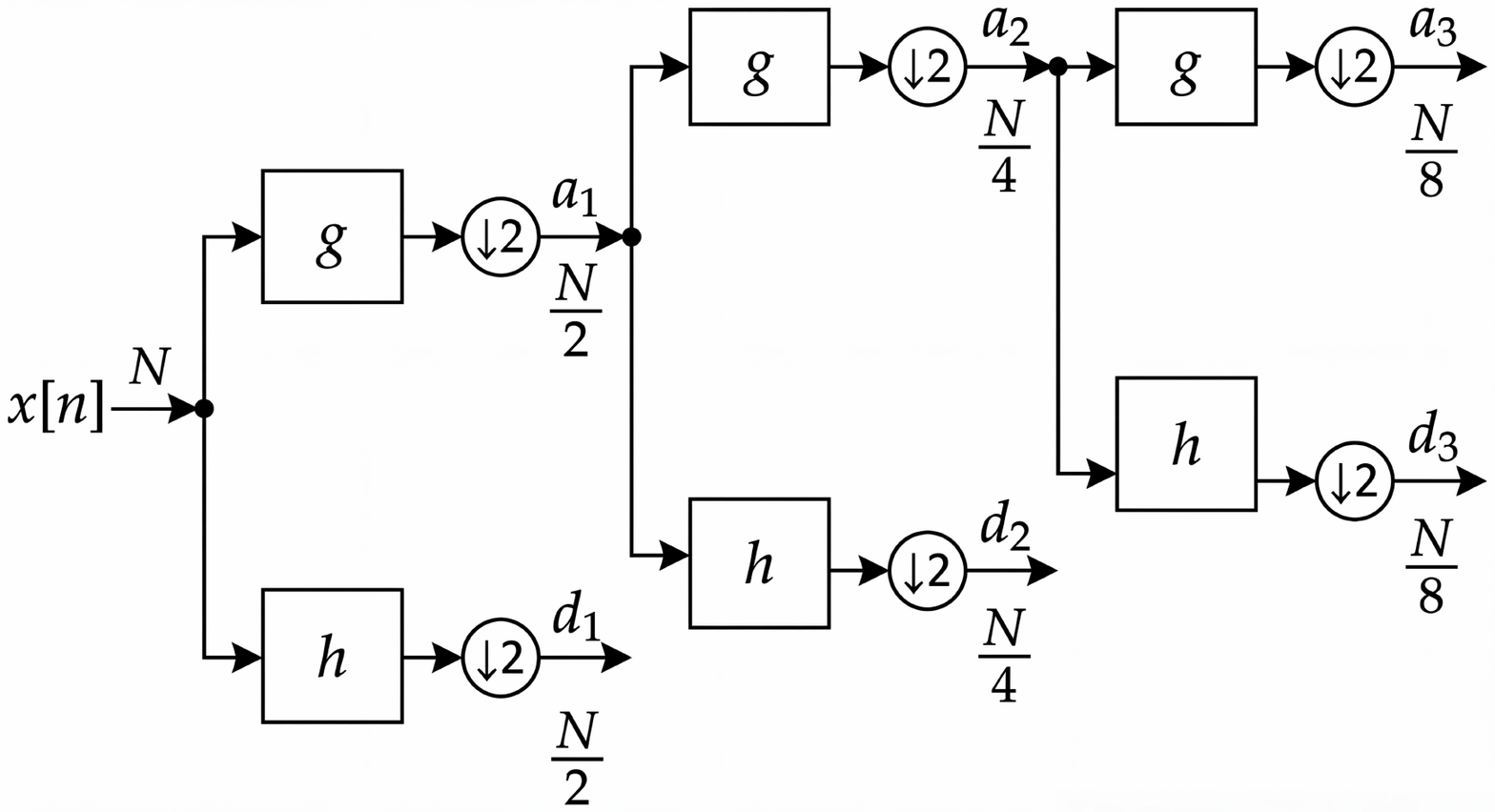} 
	\caption{Multiresolution analysis (MRA) illustrating recursive approximation/detail splitting with decimation by two.
		The input discrete-time signal $x_n$ is successively filtered by the low-pass filter $g$ (scaling function) and high-pass filter $h$ (wavelet function), followed by downsampling by a factor of two. The approximation stream propagates upward through all levels, while the detail streams are extracted at each corresponding scale.
		At each stage, the signal length is halved ($N \rightarrow N/2 \rightarrow N/4 \rightarrow N/8$), forming the dyadic tree structure characteristic of the DWT \cite{Mallat2009}.}
	\label{fig:mra}
\end{figure}

Despite this structural alignment, DWT relies on fixed, pre-designed filter banks (Haar, Daubechies, Coiflets, Symmlets) optimized for generic smoothness assumptions\footnote{Mallat~\cite{Mallat2009} shows that classical wavelets achieve optimal approximation rates for functions in Besov spaces -- those with 
bounded derivatives admitting local polynomial approximations. Network traffic violates this assumption due to its impulsive, fractal structure.} \cite{Mallat2009}. While these bases provide rigorous guarantees, their filters are hand-crafted to maximize vanishing moments, not to capture the power-law correlations $\phi(k) \sim k^{-\beta}$ characteristic of LRD traffic. Consequently, fixed wavelets achieve suboptimal energy compaction on backbone traces: detail coefficients retain significant energy that could be concentrated into approximations, degrading rate-distortion performance under bandwidth constraints. The central challenge is thus to \emph{learn} wavelet filters adapted to traffic-specific correlation structures while preserving the mathematical guarantees that make wavelets reliable for mission-critical telemetry.

\subsection{MERA-Inspired Orthogonal Layers}
\label{subsec:mera_layers}

A hierarchical architecture inspired by MERA TN is introduced to provide a structured parameterization of orthonormal wavelets. The design is guided by three principles: (1)~\textbf{Locality} -- transformations operate on disjoint pairs; (2)~\textbf{Orthogonality} -- all operations preserve inner products; and (3)~\textbf{Hierarchy} -- layers correspond to dyadic scales.

\begin{definition}[MERA-Inspired Orthogonal Layer]
	\label{def:mera_layer}
	Let $\mathbf{x} \in \mathbb{R}^N$ be a discrete-time signal with $N = 2^L$, where $N$ is the number of samples and $L$ is the maximum decomposition level. A MERA layer at scale $\ell$ applies a $2 \times 2$ orthogonal matrix $U_\ell \in \mathcal{O}(2)$ to disjoint pairs of samples  (implicit downsampling by two ($\downarrow 2$)):
	\begin{equation}
		\label{eq:mera_layer}
		\begin{bmatrix} a_k^{(\ell)} \\ d_k^{(\ell)} \end{bmatrix}
		= U_\ell 
		\begin{bmatrix} x_{2k} \\ x_{2k+1} \end{bmatrix},
		\qquad
		k = 0,\ldots,\frac{N}{2^{\ell }}-1,
	\end{equation}
	where the outputs $\mathbf{a}=\{a_k^{(\ell)}\}$ and $\mathbf{d}=\{d_k^{(\ell)}\}$ have length $N/2^{\ell}$ each and are the decimated approximation and detail coefficients, respectively.
\end{definition}

\begin{example}[MERA Layer Computation]
	\label{ex:mera-layer-computation}
	Consider input $\mathbf{x} = [1, 2, 3, 4]^\top$ and orthogonal matrix $U_1 = \frac{1}{\sqrt{2}}\begin{bmatrix} 1 & 1 \\ 1 & -1 \end{bmatrix}$ (Haar). Applying Eq.~\eqref{eq:mera_layer}:
	\begin{align*}
		\begin{bmatrix} a_0^{(1)} \\ d_0^{(1)} \end{bmatrix} &= U_1 \begin{bmatrix} x_0 \\ x_1 \end{bmatrix} = \frac{1}{\sqrt{2}} \begin{bmatrix} 1 & 1 \\ 1 & -1 \end{bmatrix} \begin{bmatrix} 1 \\ 2 \end{bmatrix} = \frac{1}{\sqrt{2}} \begin{bmatrix} 3 \\ -1 \end{bmatrix} \\
		\begin{bmatrix} a_1^{(1)} \\ d_1^{(1)} \end{bmatrix} &= U_1 \begin{bmatrix} x_2 \\ x_3 \end{bmatrix} = \frac{1}{\sqrt{2}} \begin{bmatrix} 1 & 1 \\ 1 & -1 \end{bmatrix} \begin{bmatrix} 3 \\ 4 \end{bmatrix} = \frac{1}{\sqrt{2}} \begin{bmatrix} 7 \\ -1 \end{bmatrix}
	\end{align*}
	Yielding approximation $\mathbf{a} = \frac{1}{\sqrt{2}}[3, 7]^\top$ and detail $\mathbf{d} = \frac{1}{\sqrt{2}}[-1, -1]^\top$. Energy conservation: $\|\mathbf{x}\|^2 = 30 = \|\mathbf{a}\|^2 + \|\mathbf{d}\|^2 = 29 + 1 = 30$ \checkmark
\end{example}

Orthogonality $U_\ell^\dagger U_\ell = I$ (Definition~\ref{def:orthogonal}) ensures (i) local pairwise energy conservation $a_k^2 + d_k^2 = x_{2k}^2 + x_{2k+1}^2$; (ii) 
Parseval identity $\|\mathbf{x}\|^2 = \|\mathbf{a}^{(L)}\|^2  + \sum_{\ell=1}^L\|\mathbf{d}^{(\ell)}\|^2$; and (iii) PR via $U_\ell^\dagger$. A complete $L$-level ($L=4$) cascade (Fig.~\ref{fig:mera-circuit}) applies these layers recursively, producing $\{\mathbf{a}^{(L)}, \mathbf{d}^{(L)}, \ldots, \mathbf{d}^{(1)}\}$. The resulting analysis operator $\mathcal{A}\in\mathcal{O}(N)$ inherits all guarantees with $\mathcal{O}(N)$ complexity via decimation, matching fast wavelet transforms.

\begin{figure*}[!t]
	\centering
	\includegraphics[width=0.95\textwidth]{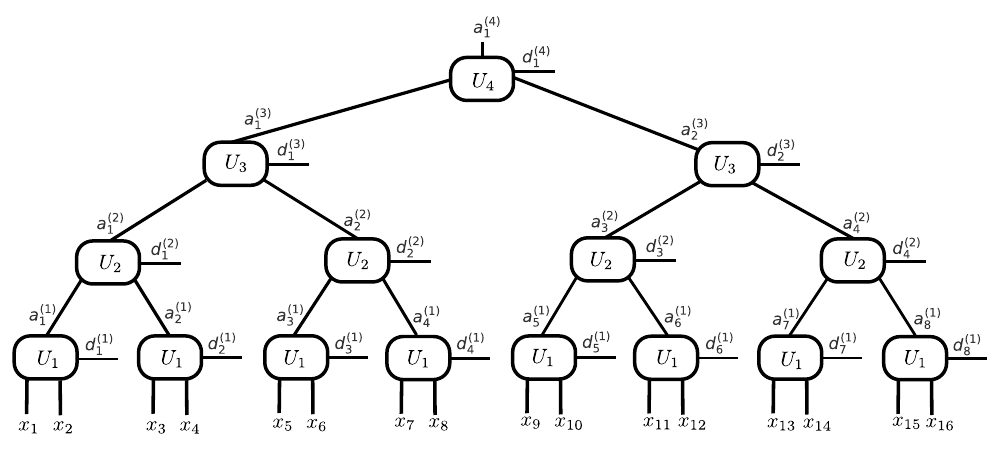}
	\caption{MERA-inspired wavelet circuit with four dyadic levels ($L=4$). 
		The input signal samples $x_1,\dots,x_{16}$ feed the first layer, which consists of parallel $2\times2$ orthogonal blocks $U_1$ acting on disjoint pairs. 
		At each level, the approximation outputs $a^{(\ell)}$ propagate upward through the hierarchy, while the detail outputs $d^{(\ell)}$ are extracted at their respective scales. 
		This hierarchical structure parallels the DWT (Fig.~\ref{fig:mra}) but employs \emph{learnable} transformation blocks $U_\ell$, $\ell=1,2,3,4$.}
	\label{fig:mera-circuit}
\end{figure*}

\paragraph*{Summary}
This section established the MERA-inspired architecture for adaptive wavelets. Section \ref{subsec:need_adaptive} motivated the need for adaptive multiscale models tailored to LRD traffic. Section \ref{subsec:mera_layers} introduced the orthogonal layer structure (Definition \ref{def:mera_layer}) that provides hierarchical decomposition while preserving energy conservation. Section V establishes the exact mathematical equivalence between these layers and classical paraunitary filter banks.

\section{Equivalence to Paraunitary Filter Banks}
\label{sec:equivalence-PUFB}

This section establishes the exact equivalence between MERA-inspired layers and two-channel paraunitary wavelet filter banks. Section \ref{subsec:poly-theory} introduces the polyphase theory background necessary for this equivalence. Section \ref{subsec:main-equivalence} presents the main result (Theorem \ref{thm:circuit_pu}) demonstrating that MERA layers with constant orthogonal matrices are mathematically equivalent to two-tap paraunitary filter banks. Section \ref{subsec:variational} formulates the manifold-constrained learning objective. Together, these results provide the theoretical foundation upon which the variational learning and optimization procedures developed in subsequent sections are built.

\subsection{Polyphase Theory Background}
\label{subsec:poly-theory}

\begin{definition}[Paraunitary Filter Bank]
	\label{def:PUFB}
	A two-channel filter bank with polyphase matrix
	\begin{equation}
		\label{eq:polyphase-matrix}
		\mathbf{E}(z) =
		\begin{bmatrix}
			G_0(z) & G_1(z) \\
			H_0(z) & H_1(z)
		\end{bmatrix}
	\end{equation}
	is said to be \emph{paraunitary} if
	\begin{equation}
		\label{eq:paraunitary}
		\mathbf{E}(z)\,\mathbf{E}^\dagger(z^{-1}) = \mathbf{I}.
	\end{equation}
	This condition ensures:
	\begin{enumerate}
		\item \textbf{PR:} synthesis filters
		$\tilde{G}(z) \triangleq G(z^{-1})$ and $\tilde{H}(z) \triangleq H(z^{-1})$
		satisfy $\hat{x}(z) = x(z)$;
		\item \textbf{Parseval energy conservation:}
		$\|\mathbf{x}\|^2 = \|\mathbf{a}^{(L)}\|^2  + \sum_{\ell=1}^L\|\mathbf{d}^{(\ell)}\|^2$;
		\item \textbf{Frequency-domain power complementarity:}
		$|G(\omega)|^2 + |H(\omega)|^2 = 2$.
	\end{enumerate}
\end{definition}

In multirate signal processing (Fig.~\ref{fig:qmf_diagrams}), PR requires the polyphase matrix~\eqref{eq:polyphase-matrix} to satisfy the paraunitary condition~\eqref{eq:paraunitary}~\cite{vaidyanathan1993}.

\textbf{Intuition:} A MERA-inspired layer applies the same $2 \times 2$ orthogonal matrix $U_\ell$ to disjoint pairs of samples -- that is, it \emph{intrinsically operates in the polyphase domain}, where filtering and decimation collapse into a single matrix multiplication. This pairwise block transform is equivalent to a two-channel paraunitary filter bank with two-tap finite impulse response (FIR) analysis filters. The special case where $U_\ell$ exhibits quadrature mirror filter (QMF) structure, combined with maximum DC gain, uniquely yields the Haar wavelet\footnote{Strictly speaking, the classical QMF structure ($H(z) = G(-z)$) cannot simultaneously achieve PR and linear phase with FIR filters, except for the trivial Haar case, as proven by Vaidyanathan \cite{vaidyanathan1993}. The filters employed in this work belong to the class of CQF introduced by Smith and Barnwell \cite{SmithBarnwell1986} -- also referred to as \textit{paraunitary QMF banks} by Vaidyanathan. }.


\begin{figure}[!t]
	\centering
	\includegraphics[width=0.8\columnwidth]{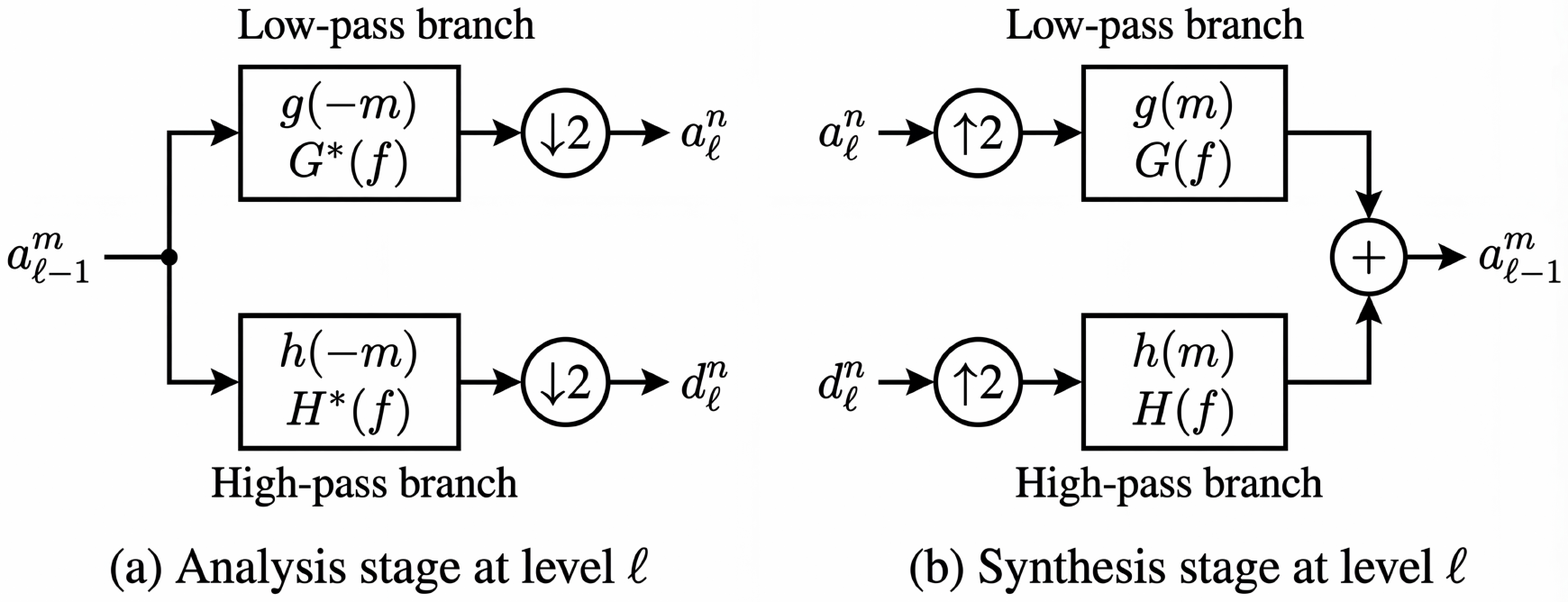}
	\caption{Two-channel CQF system. (a): Analysis stage splits input $a_{\ell-1}(m)$ via filters $g(-m)$, $h(-m)$ and downsampling by a factor of two ($\downarrow 2$), producing $a_\ell(n)$ and $d_\ell(n)$. (b): Synthesis stage upsamples by a factor of two ($\uparrow 2$), filters via $g(n)$, $h(n)$, and sums to reconstruct $a_{\ell-1}(m)$. 
		PR is characterized in the polyphase domain by a paraunitary matrix, a condition that will be guaranteed by orthogonal $U_\ell$ (Theorem~\ref{thm:circuit_pu}).}
	\label{fig:qmf_diagrams}
\end{figure}

\paragraph{Polyphase Decomposition.}
In multirate filter bank theory, the type-1 polyphase decomposition represents a filter $G(z)=\sum_n g[n]z^{-n}$ by separating its even- and odd-indexed coefficients. Following the notation of Vaidyanathan~\cite{vaidyanathan1993}, $G(z)$ denotes the full analysis filter, while $G_0(z)$ and $G_1(z)$ denote its even and odd polyphase components, respectively, defined as
\begin{equation}
	\label{eq:polyphase_def}
	G(z) = G_0(z^2) + z^{-1}G_1(z^2),
\end{equation}
with $G_0(z) = \sum_k g[2k]z^{-k}$ and $G_1(z) = \sum_k g[2k+1]z^{-k}$.

For a $2 \times 2$ polyphase matrix, the entries admit two equivalent representations. The first expresses the matrix directly in terms of the polyphase components of the analysis filters,
\begin{equation}
	\mathbf{E}(z) =
	\begin{bmatrix}
		G_0(z) & G_1(z) \\[2pt]
		H_0(z) & H_1(z)
	\end{bmatrix},
	\label{eq:polyphase_filters}
\end{equation}
whereas an alternative, more structural representation employs generic polyphase entries,
\begin{equation}
	\mathbf{E}(z) =
	\begin{bmatrix}
		E_{00}(z) & E_{01}(z) \\[2pt]
		E_{10}(z) & E_{11}(z)
	\end{bmatrix}.
	\label{eq:polyphase_entries}
\end{equation}
Although \eqref{eq:polyphase_filters} and \eqref{eq:polyphase_entries} are algebraically equivalent, the generic notation in \eqref{eq:polyphase_entries} emphasizes the polyphase matrix as the primary structural object.

The correspondence between the two notations is given by $E_{00}(z) \equiv G_0(z)$, $E_{01}(z) \equiv G_1(z)$, $E_{10}(z) \equiv H_0(z)$, and $E_{11}(z) \equiv H_1(z)$.
Applying the polyphase decomposition~\eqref{eq:polyphase_def} to both analysis filters yields:
\begin{align}
	G(z) &= E_{00}(z^2) + z^{-1}E_{01}(z^2)
	= G_0(z^2) + z^{-1}G_1(z^2), \label{eq:poly1} \\
	H(z) &= E_{10}(z^2) + z^{-1}E_{11}(z^2)
	= H_0(z^2) + z^{-1}H_1(z^2). \label{eq:poly2}
\end{align}

The usefulness of this representation follows from the Noble identities~\cite{vaidyanathan1993}, which establish that filtering by $H(z^2)$ followed by decimation by 2 ($\downarrow 2$) is equivalent to decimation by 2 ($\downarrow 2$) followed by filtering by $H(z)$. This commutation property allows the analysis outputs $A(z)$ and $D(z)$ to be computed directly in the decimated (polyphase) domain:
\begin{equation}
	\begin{bmatrix} A(z) \\ D(z) \end{bmatrix}
	=
	\mathbf{E}(z)
	\begin{bmatrix} X_0(z) \\ X_1(z) \end{bmatrix},
\end{equation}
where $X_0(z)$ and $X_1(z)$ denote the even and odd polyphase components of the input $X(z)=X_0(z^2)+z^{-1}X_1(z^2)$.

When $\mathbf{E}(z)\equiv U$ is \emph{constant} (i.e., $z$-independent), so that all polyphase entries satisfy $E_{ij}(z)\equiv u_{ij}$, equations \eqref{eq:poly1}--\eqref{eq:poly2} reduce to length-2 FIR analysis filters,
\begin{equation}
	G(z) = u_{00} + u_{01}z^{-1},
	\quad
	H(z) = u_{10} + u_{11}z^{-1}.
	\label{eq:two_tap_filters}
\end{equation}
This constant-polyphase structure is precisely the one induced by MERA layers, as demonstrated next.

\subsection{Main Equivalence Result}
\label{subsec:main-equivalence}

\begin{theorem}[Architectural Equivalence]
	\label{thm:circuit_pu}
	A MERA-inspired layer (Definition~\ref{def:mera_layer}) is equivalent to a two-channel paraunitary filter bank whose polyphase representation is a constant orthonormal matrix $\mathbf{E}(z) \equiv U_\ell$:
	\begin{equation}
		\begin{bmatrix} A(z) \\ D(z) \end{bmatrix} = \begin{bmatrix} g_0 & g_1 \\ h_0 & h_1 \end{bmatrix} \begin{bmatrix} X_0(z) \\ X_1(z) \end{bmatrix}
		\label{eq:polyphase_relation}
	\end{equation}
	where $z=e^{jw}$, $X_0(z) = \sum_k x_{2k}z^{-k}$, $X_1(z) = \sum_k x_{2k+1}z^{-k}$ are even/odd polyphase components, and $U_\ell = \begin{bsmallmatrix} g_0 & g_1 \\ h_0 & h_1 \end{bsmallmatrix} \in \mathcal{O}(2)$.
\end{theorem}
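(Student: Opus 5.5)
The plan is to prove the equivalence in two directions. First I show that the MERA layer's input–output map coincides with the polyphase-domain action of a filter bank whose polyphase matrix is the constant matrix $U_\ell$. Then I show that this filter bank is paraunitary in the sense of Definition~\ref{def:PUFB}.

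\textbf{Step 1: $z$-transform of the layer.} Starting from Definition~\ref{def:mera_layer}, write the layer componentwise as
\[
a_k^{(\ell)} = g_0 x_{2k} + g_1 x_{2k+1}, \qquad d_k^{(\ell)} = h_0 x_{2k} + h_1 x_{2k+1}.
\]
Multiply by $z^{-k}$ and sum over $k$. Linearity gives $A(z)=g_0X_0(z)+g_1X_1(z)$ and $D(z)=h_0X_0(z)+h_1X_1(z)$, which is exactly \eqref{eq:polyphase_relation}.

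\textbf{Step 2: identify the filter bank.} Comparing with the general polyphase analysis relation of Section~\ref{subsec:poly-theory}, we read off $\mathbf{E}(z)\equiv U_\ell$, i.e.\ $E_{ij}(z)\equiv u_{ij}$. By \eqref{eq:poly1}--\eqref{eq:poly2} and the Noble identities, the layer is therefore the two-channel bank with the two-tap analysis filters \eqref{eq:two_tap_filters}, followed by $\downarrow 2$.

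\textbf{Step 2a: the filter bank reduces to a layer.} For the converse, start from any two-channel bank with constant polyphase matrix $U\in\mathcal{O}(2)$. Reading \eqref{eq:polyphase_relation} coefficientwise in $z^{-k}$ recovers the pairwise action \eqref{eq:mera_layer}. Hence the two descriptions are the same linear operator on sequences.

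\textbf{Step 3: paraunitarity.} Since $\mathbf{E}$ is constant and real, $\mathbf{E}^\dagger(z^{-1})=U_\ell^\mathsf{T}$, so $\mathbf{E}(z)\mathbf{E}^\dagger(z^{-1})=U_\ell U_\ell^\mathsf{T}=I$ because $U_\ell\in\mathcal{O}(2)$. This verifies \eqref{eq:paraunitary}. The three listed consequences then follow.
\begin{itemize}
\item[(i)] PR: the synthesis polyphase matrix $U_\ell^\mathsf{T}$ inverts the analysis, and upsampling reassembles $X(z)=X_0(z^2)+z^{-1}X_1(z^2)$.
\item[(ii)] Parseval: it holds per pair, summed over $k$, and iterated over levels.
\item[(iii)] Power complementarity: on $|z|=1$,
\[
|G(\omega)|^2+|H(\omega)|^2=\|U_\ell\,[1,e^{-j\omega}]^\mathsf{T}\|^2=2.
\]
\end{itemize}

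\textbf{Main obstacle.} I expect the hardest part to be conventions rather than the algebra. The polyphase definition of Section~\ref{subsec:poly-theory} uses $z^{-1}$ delays and type-1 components. Meanwhile Fig.~\ref{fig:qmf_diagrams} uses time-reversed analysis filters $g(-m)$, and the statement writes $z=e^{j\omega}$. I would fix the convention that $A(z)$ is the $z$-transform of the decimated output sequence. Under that convention the index bookkeeping $a_k\leftrightarrow(x_{2k},x_{2k+1})$ matches \eqref{eq:two_tap_filters} exactly, and time reversal only permutes or conjugates the filter taps without affecting orthogonality. For finite $N=2^L$ I would treat the sums as finite polynomials, so no boundary or periodization issues arise, since the blocks act on disjoint pairs.
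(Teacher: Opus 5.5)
Your proposal is correct and follows essentially the same route as the paper's proof. The shared steps are: $z$-transforming the pairwise relation to get $\mathbf{E}(z)\equiv U_\ell$ and the two-tap filters, reading the polyphase relation coefficientwise for the converse, and reducing paraunitarity to $U_\ell U_\ell^{\mathsf{T}}=I$; your one-line derivation $|G(\omega)|^2+|H(\omega)|^2=\|U_\ell[1,e^{-j\omega}]^{\mathsf{T}}\|^2=2$ and your remark on the time-reversed analysis-filter convention are small clarifications the paper leaves implicit.
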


\textit{The proof is given in Appendix~\ref{appendix:proof_equivalence}.}

\begin{corollary}[PR QMF as a special case]
	\label{cor:qmf}
	The PR QMF constraint $h[n] = (-1)^n g[N-1-n]$ ($N=2$ for a two-tap FIR filter)~\cite{Mallat2009, vaidyanathan1993} arises as a special case of the framework when additional symmetry is imposed. This condition ensures that the highpass filter $H(z)$ is derived from the lowpass $G(z)$ through frequency reversal, thereby guaranteeing both orthogonality and alias cancellation.
	
	Under the assumptions of Theorem~\ref{thm:circuit_pu}, imposing the quadrature-mirror symmetry with $N=2$ gives
	\begin{equation}
		h[n] = (-1)^n g[1-n] 
		\quad \Longleftrightarrow \quad 
		H(z) = -z^{-1}G(-z^{-1}).
		\label{eq:qmf_constraint}
	\end{equation}
	For two-tap FIR analysis filters $G(z) = g_0 + g_1 z^{-1}$ and $H(z) = h_0 + h_1 z^{-1}$, the QMF relation yields
	\begin{equation}
		h_0 = g_1, \qquad h_1 = -g_0,
		\label{eq:qmf_coeffs}
	\end{equation}
	so that the polyphase matrix takes the form
	\begin{equation}
		U_\ell = 
		\begin{bmatrix} 
			g_0 & g_1 \\ 
			g_1 & -g_0 
		\end{bmatrix}.
		\label{eq:qmf_polyphase}
	\end{equation}
	
	Furthermore, the Haar wavelet is the unique real two-tap FIR filter bank satisfying both PR, QMF paraunitarity, and maximal DC gain (equivalently, $g_0 = g_1$ (see Appendix~\ref{appendix:proof_qmf} for proof)).	
\end{corollary}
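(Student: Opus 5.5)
The plan is to work entirely inside the constant-polyphase setting supplied by Theorem~\ref{thm:circuit_pu}. There, a MERA-inspired layer is a two-channel filter bank with two-tap analysis filters $G(z)=g_0+g_1z^{-1}$ and $H(z)=h_0+h_1z^{-1}$, as in \eqref{eq:two_tap_filters}, and with polyphase matrix $\mathbf{E}(z)\equiv U_\ell$. Imposing the QMF relation then becomes a purely algebraic restriction on the four entries of $U_\ell$. The proof splits into three steps:
\begin{enumerate}
\item deriving the time- and $z$-domain forms \eqref{eq:qmf_constraint}, \eqref{eq:qmf_coeffs} and \eqref{eq:qmf_polyphase};
\item showing that within this family PR (orthogonality of $U_\ell$) reduces to a single normalization;
\item optimizing the DC gain over the resulting one-parameter family to single out Haar.
\end{enumerate}

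\textbf{Step 1: the QMF relation.} I would evaluate $h[n]=(-1)^n g[1-n]$ at the only two nonzero indices. For $n=0$ this gives $h_0=g_1$, and for $n=1$ it gives $h_1=-g_0$, which is \eqref{eq:qmf_coeffs}. For the $z$-domain equivalence I would write $H(z)=\sum_n(-1)^n g[1-n]z^{-n}$ and substitute $m=1-n$. This yields
\[
H(z)=\sum_m (-1)^{1-m}g[m]z^{m-1}=-z^{-1}\sum_m g[m](-z^{-1})^{-m}=-z^{-1}G(-z^{-1}).
\]
The substitution is reversible, so the two forms in \eqref{eq:qmf_constraint} are equivalent. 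Placing $h_0,h_1$ into $U_\ell=\begin{bsmallmatrix} g_0 & g_1\\ h_0 & h_1\end{bsmallmatrix}$ gives \eqref{eq:qmf_polyphase} directly.

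\textbf{Step 2: PR and paraunitarity.} Next I would compute
\[
U_\ell^\dagger U_\ell=\begin{bmatrix} g_0^2+g_1^2 & g_0g_1-g_1g_0\\ g_1g_0-g_0g_1 & g_1^2+g_0^2\end{bmatrix}=(g_0^2+g_1^2)\,\mathbf{I}.
\]
The off-diagonal terms vanish identically, which is the alias-cancellation part of the claim. By Theorem~\ref{thm:circuit_pu} and Definition~\ref{def:PUFB}, paraunitarity (hence PR) holds if and only if $g_0^2+g_1^2=1$. So the PR QMF family is exactly $g_0=\cos\theta$, $g_1=\sin\theta$ with $\theta\in[0,2\pi)$. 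Every member has $\det U_\ell=-1$, so the family is the set of reflections in $\mathcal{O}(2)$.

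\textbf{Step 3: uniqueness of Haar.} The DC gain is
\[
G(1)=g_0+g_1=\cos\theta+\sin\theta=\sqrt{2}\sin(\theta+\pi/4).
\]
It attains its maximum $\sqrt{2}$ uniquely at $\theta=\pi/4$, that is, at $g_0=g_1=1/\sqrt{2}$, and then $h_0=1/\sqrt{2}$, $h_1=-1/\sqrt{2}$. This is exactly the Haar matrix of Example~\ref{ex:mera-layer-computation}.

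I expect the main obstacle to be making ``maximal DC gain (equivalently $g_0=g_1$)'' precise. On the unit circle $g_0^2+g_1^2=1$, the condition $g_0=g_1$ alone also admits $g_0=g_1=-1/\sqrt{2}$, which gives the \emph{minimal} DC gain. I would therefore state the equivalence as $g_0=g_1>0$, or equivalently identify Haar up to a global sign. I would also note the alternative characterization: $G(-1)=0$, a zero at Nyquist, or equivalently $H(1)=g_1-g_0=0$, a zero at DC. This is the same condition $g_0=g_1$, and it confirms that the lowpass/highpass interpretation singles out the same filter.
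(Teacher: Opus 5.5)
Your proposal is correct and takes essentially the same route as the paper's proof. Orthogonality of the QMF-form matrix reduces to $g_0^2+g_1^2=1$, the family is parameterized as $(\cos\theta,\sin\theta)$, and DC-gain maximization singles out $\theta=\pi/4$; you also derive $h_0=g_1$, $h_1=-g_0$ and $H(z)=-z^{-1}G(-z^{-1})$, which the paper takes as given.

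Your sign caveat is well placed. The paper maximizes $|G(0)|=|\cos\theta+\sin\theta|$ and claims $g_0=g_1$ gives a unique solution, but $\theta=5\pi/4$ (i.e., $-U_{\text{Haar}}$) also satisfies $g_0=g_1$ and attains $|G(0)|=\sqrt{2}$. Uniqueness therefore holds only for the signed gain $G(1)$, as you use it, or up to a global sign.
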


\subsection{Manifold-Constrained Learning Objective}
\label{subsec:variational}

This subsection formulates the learning problem associated with the MERA-inspired paraunitary architecture. The optimization objective is to learn scale-dependent orthogonal transformations that maximize energy compaction in LRD traffic while guaranteeing PR and Parseval energy conservation.

Let $\theta = \{U_\ell\}_{\ell=1}^{L}$ denote the learnable parameters, where each $U_\ell \in \mathcal{O}(2)$. For an input signal $\mathbf{x}$, the analysis transform $\mathcal{A}_\theta$ produces the coefficient set $\{\mathbf{a}^{(L)}, \mathbf{d}^{(1)}, \ldots, \mathbf{d}^{(L)}\}$. Signal reconstruction is given by $\hat{\mathbf{x}} = \mathcal{S}_\theta(\mathcal{A}_\theta(\mathbf{x}))$, where $\mathcal{S}_\theta = \mathcal{A}_\theta^\dagger$.

The loss function $\mathcal{L}(\{U_\ell\})$ promotes sparse multiscale representations by concentrating signal energy into a small number of approximation coefficients while penalizing the aggregate magnitude of detail coefficients across all scales:
\begin{equation}
	\label{eq:loss_variational}
	\min_{\{U_\ell \in \mathcal{O}(2)\}} \mathcal{L} = \underbrace{\lambda_{\text{sparse}} \sum_{\ell=1}^{L} \frac{1}{N_\ell} \|\mathbf{d}^{(\ell)}\|_1}_{\text{sparsity term}} + \underbrace{\frac{\lambda_{\text{MSE}}}{N} \|\mathbf{x} - \hat{\mathbf{x}}\|_2^2}_{\text{reconstruction term (MSE)}}
\end{equation}
where $N_\ell \triangleq \text{card}(\mathbf{d}^{(\ell)}) = N/2^\ell$ denotes the number of detail coefficients at scale $\ell$.

\paragraph{Loss function terms} The \emph{sparsity term} ($\ell_1$ norm) promotes energy compaction into few large-magnitude coefficients. 
The \emph{reconstruction term} (MSE) penalizes mismatch between the input signal and its reconstruction from the full coefficient set. This term is optional: when $\lambda_{\text{MSE}} = 0$, training minimizes only the sparsity objective $\sum_\ell \|\mathbf{d}^{(\ell)}\|_1$. Since orthogonality is enforced by projection after each update, PR is guaranteed regardless of whether the MSE term is active.

\paragraph{Normalization} The sparsity term is normalized by $N_\ell$ to ensure that $\lambda_{\text{sparse}}$ remains invariant with respect to changes in the decomposition depth $L$ and the window size $N$. Without this normalization, the effective weight of the sparsity penalty would decrease as decimation reduces the number of coefficients at coarser scales, requiring retuning of $\lambda_{\text{sparse}}$ for different configurations.

\paragraph{Manifold Constraint}

The constraint $U_\ell \in \mathcal{O}(2)$ defines a smooth Riemannian manifold. This formulation differs from unconstrained empirical risk minimization, as orthogonality is enforced \emph{structurally} via polar projection onto the orthogonal group (Section~\ref{sec:learning-framework}) rather than through soft penalty terms. As a result, perfect reconstruction and Parseval energy conservation are satisfied by construction at every training iteration.

\paragraph*{Summary}
This section established the exact equivalence between MERA-inspired layers and paraunitary filter banks. Section~\ref{subsec:poly-theory} introduced the polyphase theory framework, showing how two-channel filter banks operate in the decimated domain via constant polyphase matrices. Section~\ref{subsec:main-equivalence} proved the main result (Theorem~\ref{thm:circuit_pu}): MERA layers with orthogonal matrices $U_{\ell} \in \mathcal{O}(2)$ are mathematically equivalent to two-tap paraunitary filter banks, inheriting perfect reconstruction and energy conservation guarantees. Corollary~\ref{cor:qmf} showed that the Haar wavelet arises as the unique QMF filter maximizing DC gain. Section~\ref{subsec:variational} formulated the manifold-constrained learning objective, which promotes sparsity while enforcing orthogonality via polar projection at every training iteration. Section \ref{sec:learning-framework} presents the optimization algorithm; Section \ref{sec:experiments} validates performance on real network traces.

\section{Learning Framework}
\label{sec:learning-framework}

Having established in Section~\ref{sec:equivalence-PUFB} that MERA-inspired layers form paraunitary filter banks, the practical optimization pipeline is described next\footnote{The implementation is inspired by the MERA Julia code example released by Evenbly~\cite{TensorsNetMERA}.}. This pipeline learns the scale isometries $\{U_\ell\}_{\ell=1}^L$ directly from data while preserving PR, energy conservation, and numerical stability.

\subsection{Optimization Pipeline}
\label{subsec:pipeline}

The core learning procedure, detailed in Algorithm~\ref{alg:mera-opt}, implements a variational loop that optimizes the MERA-inspired filter banks on windowed traffic segments. The algorithm requires a real-valued signal window $\mathbf{x} \in \mathbb{R}^N$, the number of decomposition levels $L$, and non-negative loss weights $\lambda_{\text{sparse}}$ and $\lambda_{\text{MSE}}$. The output is a collection of scale isometries $\mathcal{U} = \{U_1, \ldots, U_L\}$ constrained to remain orthonormal throughout training.

\begin{algorithm}[t]
	\caption{MERA-Wavelet Optimization}
	\label{alg:mera-opt}
	\begin{algorithmic}[1]
		\Require $x$, $L$, $\texttt{numiter}$, $\eta$, $\lambda_{\text{sparse}}$, $\lambda_{\text{MSE}}$, $U_0$ (optional)
		\State $U \gets U_0$ if provided; otherwise initialize randomly and project onto $\mathcal{O}(2)$
		\For{$k = 1$ to $\texttt{numiter}$}
		\State $(a, \{d^{(\ell)}\}_{\ell=1}^L) \gets \textsc{MERAAnalyze}(x, U)$ \Comment{Forward transform}
		\State $\mathcal{L}_{\text{sparse}} \gets \frac{1}{n_d}\sum_{\ell=1}^L \|d^{(\ell)}\|_1$ \Comment{Mean $\ell_1$ norm}
		\If{$\lambda_{\text{MSE}} > 0$}
		\State $\hat{x} \gets \textsc{MERASynthesize}(a, \{d^{(\ell)}\}, U)$
		\State $\mathcal{L}_{\text{MSE}} \gets \frac{1}{N} \|\hat{x} - x\|_2^2$
		\Else
		\State $\mathcal{L}_{\text{MSE}} \gets 0$
		\EndIf
		\State $\mathcal{L} \gets \lambda_{\text{sparse}} \mathcal{L}_{\text{sparse}} + \lambda_{\text{MSE}} \mathcal{L}_{\text{MSE}}$
		\State $\nabla_U \gets \textsc{Backpropagate}(\mathcal{L}, U, x)$ \Comment{Euclidean gradient}
		\State $U \gets \textsc{AdamStep}(U, \nabla_U, \eta)$ \Comment{Update in $\mathbb{R}^{2 \times 2}$}
		\For{$\ell = 1$ to $L$}
		\State $U_\ell \gets U_\ell (U_\ell^\dagger U_\ell)^{-1/2}$ \Comment{Polar projection onto $\mathcal{O}(2)$}
		\EndFor
		\EndFor
		\State \Return $U$
	\end{algorithmic}
\end{algorithm}

Each iteration proceeds in three phases:
\begin{enumerate}
	\item \textbf{Forward analysis} (line~3): decompose $\mathbf{x}$ into multiscale coefficients $\{a^{(L)}, d^{(1)}, \ldots, d^{(L)}\}$.
	\item \textbf{Loss evaluation} (lines~4--11): compute the composite objective combining sparsity promotion and (optionally) reconstruction fidelity.
	\item \textbf{Constrained update} (lines~12--15): apply Adam gradient step followed by polar projection to restore orthogonality.
\end{enumerate}

\paragraph{Gradient flow and manifold projection}
Line~12 computes the Euclidean gradient $\nabla_U \mathcal{L}$ in the ambient space $\mathbb{R}^{2 \times 2}$ via automatic differentiation (AD). Line~13 applies Adam~\cite{Kingma2014} with learning rate $\eta$. This Euclidean step violates the orthogonality constraint $U_\ell^\dagger U_\ell = I$.

Line~15 restores the constraint via \emph{polar projection}: for each $U_\ell$, the nearest orthogonal matrix in Frobenius norm is $U_\ell (U_\ell^\dagger U_\ell)^{-1/2}$. By enforcing $U_\ell \in \mathcal{O}(2)$ at every iteration, the algorithm guarantees that paraunitarity, Parseval identity, and PR hold throughout training -- not merely as soft approximations.

\paragraph{Loss function}

The sparsity term $\mathcal{L}_{\text{sparse}}$ (line~4) promotes energy compaction by penalizing the mean absolute value of detail coefficients. The normalization factor $n_d$ denotes the total number of detail coefficients across all scales, ensuring that the effective weight of the sparsity penalty remains invariant with respect to the decomposition
depth $L$ and the signal length $N$.

\paragraph{Gradient-Based Optimization}
The framework employs AD to compute gradients $\nabla_{U} \mathcal{L}$ with respect to the filter parameters. Gradients are computed via the chain rule applied through the computational graph of the MERA transform -- a process commonly termed \emph{backpropagation} in the machine learning literature~\cite{Goodfellow-et-al-2016}. 

The Adam optimizer~\cite{Kingma2014} adapts learning rates per-parameter using exponential moving averages of first ($m$) and second gradient ($v$) moments:
\begin{align}
	m_t &= \beta_1 m_{t-1} + (1-\beta_1) \nabla_{U} \mathcal{L}, \\
	v_t &= \beta_2 v_{t-1} + (1-\beta_2) (\nabla_{U} \mathcal{L})^2, \\
	U_t &= U_{t-1} - \eta \frac{\hat{m}_t}{\sqrt{\hat{v}_t} + \epsilon},
\end{align}
where $\hat{m}_t$ and $\hat{v}_t$ are bias-corrected estimates. This adaptive scheme provides faster convergence and reduced sensitivity to hyperparameter selection compared to vanilla stochastic gradient descent. The default parameters ($\beta_1 = 0.9$, $\beta_2 = 0.999$, $\epsilon = 10^{-8}$) are used throughout (Table~\ref{tab:hyperparameters}).

\subsection{Initialization}
\label{subsec:init}

The framework adopts \emph{Haar warm-start initialization}: each $U_\ell$ is set to $U_{\text{Haar}} = \frac{1}{\sqrt{2}}\begin{bsmallmatrix}1 & 1 \\ 1 & -1\end{bsmallmatrix}$. This provides coarse energy compaction from the outset, and subsequent Adam updates refine this prior to match trace-specific correlations. Empirically, Haar initialization accelerates convergence compared to random starting points while achieving identical final performance.

For random initialization, each $U_\ell$ can be drawn from a Gaussian distribution and immediately projected onto $\mathcal{O}(2)$, providing a neutral baseline that does not bias the learned filters toward any wavelet family.

\subsection{Computational Complexity}
\label{subsec:complex}

The proposed framework preserves the efficiency of classical wavelet transforms:

\textbf{Inference.} Analysis and synthesis have complexity $\mathcal{O}(N)$, identical to the DWT. Each level $\ell$ processes $N/2^\ell$ samples with constant-cost $2 \times 2$ operations.

\textbf{Training.} Each iteration requires $\mathcal{O}(N)$ for forward/backward passes. Polar projection operates on $2 \times 2$ matrices with $\mathcal{O}(1)$ cost per level, contributing $\mathcal{O}(L) = \mathcal{O}(\log N)$ overhead. Total training cost is $\mathcal{O}(T \cdot N)$ for $T$ iterations.

\textbf{Parameter efficiency.} The learned transform requires only $4L$ scalar parameters (one $2 \times 2$ orthogonal matrix per level). For $L = 5$, this amounts to 20 trainable parameters -- enabling rapid adaptation without risk of overfitting.

\paragraph*{Summary}
This section presented the MERA-inspired wavelet learning framework. Algorithm~\ref{alg:mera-opt} integrates AD, Adam optimization, and manifold-constrained projection into a unified pipeline. Section~\ref{sec:experiments} validates the framework on six years of backbone traffic, demonstrating that learned orthonormal wavelets adapt to traffic-specific correlation structures while preserving the mathematical guarantees essential for 6G telemetry.


\section{Experimental Results}
\label{sec:experiments}

This section presents an empirical validation of the proposed adaptive MERA-inspired wavelet framework. The evaluation assesses whether learned orthonormal wavelets simultaneously achieve improved rate-distortion performance and preserve the LRD properties critical for DT synchronization.

\subsection{The LRD Preservation Requirement}
\label{sec:lrd_requirement}

As mentioned in Section~\ref{sec:intro}, backbone traffic exhibits LRD characterized by power-law autocorrelation decay:
\begin{equation}
	\phi(k) \sim k^{-\beta}, \quad 0 < \beta < 1,
	\label{eq:lrd}
\end{equation}
where the decay exponent $\beta$ relates to the Hurst parameter $H \in (0.5, 1)$ via $\beta = 2 - 2H$. This slow correlation decay has profound implications for network models commonly employed in DT frameworks:

\paragraph{Queueing Analysis} For finite buffers of size $B$ packets, the overflow probability under LRD input decays polynomially rather than 
exponentially \cite{Norros1995, Parulekar1996}:
\begin{equation}
	P(\text{overflow}) \sim B^{-(2-2H)} = B^{-\beta}.
	\label{eq:overflow_law}
\end{equation}
In contrast, Markovian (memoryless) models predict $P(\text{overflow}) \sim e^{-\lambda B}$. This disparity leads to $10^2$--$10^3\times$ errors in buffer dimensioning when $H$ is underestimated, fundamentally altering capacity provisioning rules for ultra-reliable low-latency communications (URLLC) in 6G systems.

\paragraph{Capacity Planning} The effective bandwidth required to meet target loss rates scales differently under LRD traffic compared to Poisson or exponential models \cite{Leland1994}. DT-driven optimization algorithms that rely on traffic statistics as input parameters will generate invalid provisioning decisions if the telemetry compression distorts $H$.

\paragraph{Consequence} If telemetry compression degrades the LRD signature (e.g., by attenuating long-timescale correlations), the DT's predictions become statistically inconsistent with the physical network. This can lead to mis-provisioning, SLA violations, or instability in closed-loop control scenarios.

\subsection{Experimental Setup}
\label{sec:setup}

\subsubsection{Dataset and Preprocessing}
The evaluation utilizes trans-Pacific backbone traces from the MAWI (Measurement and Analysis on the WIDE Internet) Working Group Traffic Archive~\cite{mawi_archive}. Six captures spanning 2020--2025 (Samplepoint-F) were selected to represent heterogeneous operating conditions, with traffic loads ranging from 314 Mbps to 1.75 Gbps. Packet-level metadata were aggregated into byte-per-millisecond time series. Table~\ref{tab:mawi_traces_summary} summarizes the characteristics of the traces.

At the time of this study, publicly available, large-scale, millisecond-resolution traffic traces from operational 5G or beyond-5G networks are not available. This limitation is widely acknowledged in the literature. Consequently, this work validates the proposed framework on backbone aggregation traces, which capture the emergent statistical properties -- particularly LRD -- that digital twins must preserve for stable closed-loop optimization.

\begin{table}[h]
	\centering
	\caption{MAWI trace characteristics (Samplepoint-F, 15-min captures).}
	\label{tab:mawi_traces_summary}
	\begin{tabular}{lccc}
		\toprule
		Trace & Duration & Packets & Avg.~rate \\
		\midrule
		\texttt{202004081229} & 900~s & 81~M & 314~Mbps \\
		\texttt{202103181400} & 900~s & 86~M & 416~Mbps \\
		\texttt{202204131100} & 900~s & 119~M & 769~Mbps \\
		\texttt{202301131400} & 900~s & 108~M & 776~Mbps \\
		\texttt{202406192000} & 900~s & 194~M & 1.75~Gbps \\
		\texttt{202504090300} & 900~s & 126~M & 885~Mbps \\
		\bottomrule
	\end{tabular}
\end{table}

\emph{Scope limitation:} These backbone traces capture aggregated traffic from thousands of sources, exhibiting the LRD structure characteristic of statistical multiplexing. The framework's performance on wireless edge telemetry -- where individual user dynamics, channel fading, and mobility introduce distinct correlation structures -- is deferred to future investigation (Section \ref{sec:conclusion}).

\begin{figure*}[!ht]
	\begin{subfigure}[t]{0.48\textwidth}
		\includegraphics[width=\linewidth]{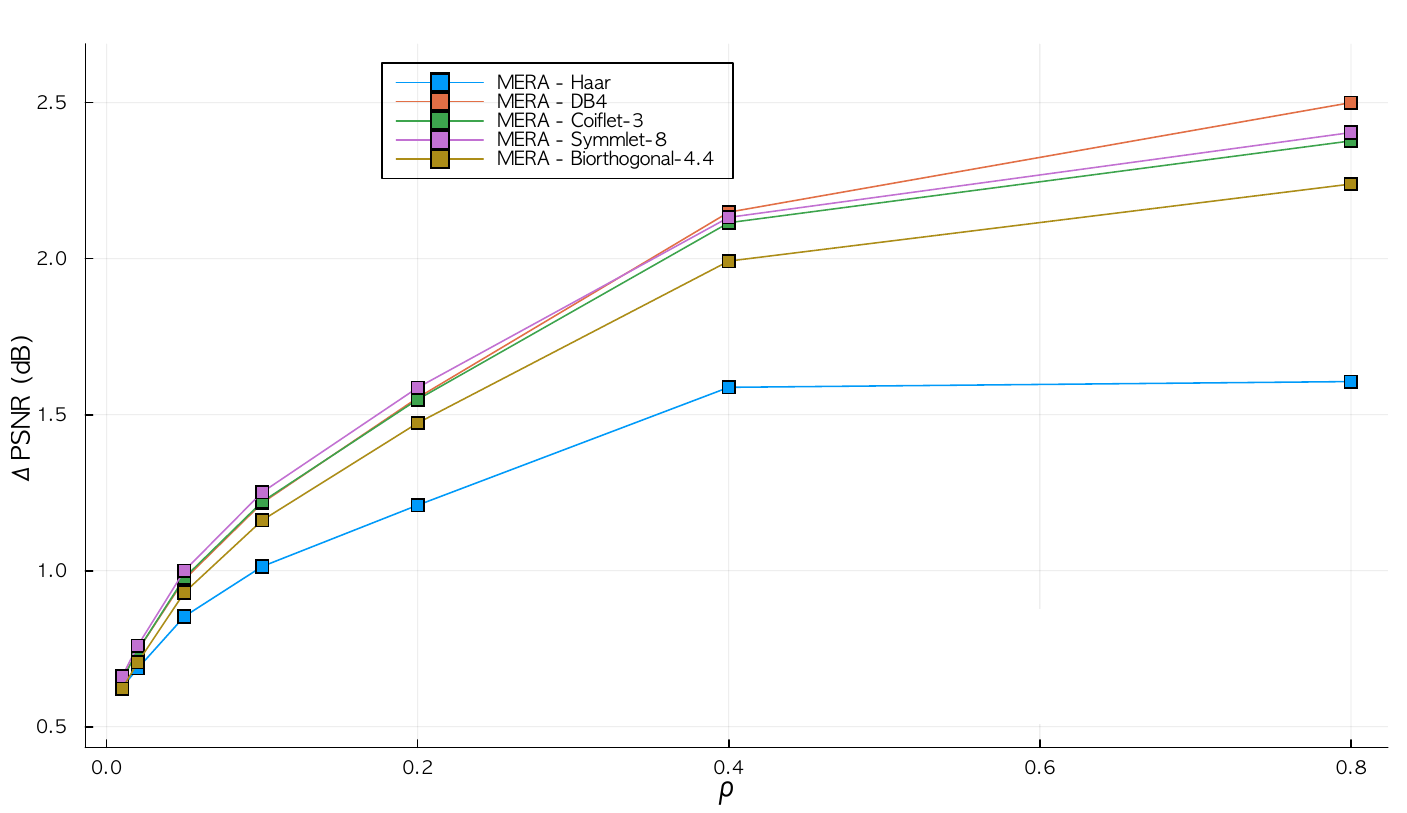}
		\caption{2020 (314 Mbps, $H{=}0.89$)}
		\label{fig:psnr_2020}
	\end{subfigure}
	\begin{subfigure}[t]{0.48\textwidth}
		\includegraphics[width=\linewidth]{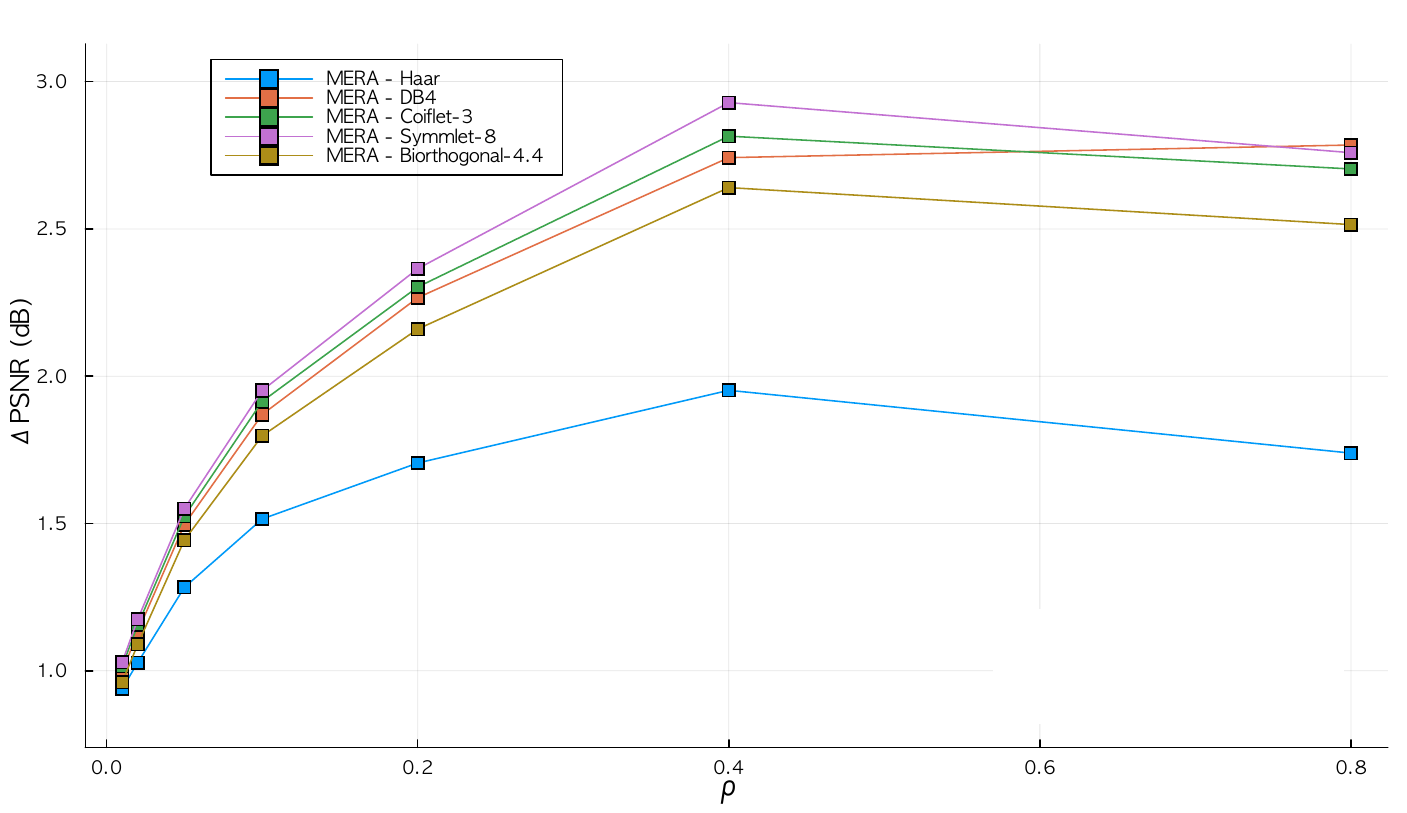}
		\caption{2021 (416 Mbps, $H{=}0.77$)}
		\label{fig:psnr_2021}
	\end{subfigure}
	
	\vspace{0.2cm}
	\begin{subfigure}[t]{0.48\textwidth}
		\includegraphics[width=\linewidth]{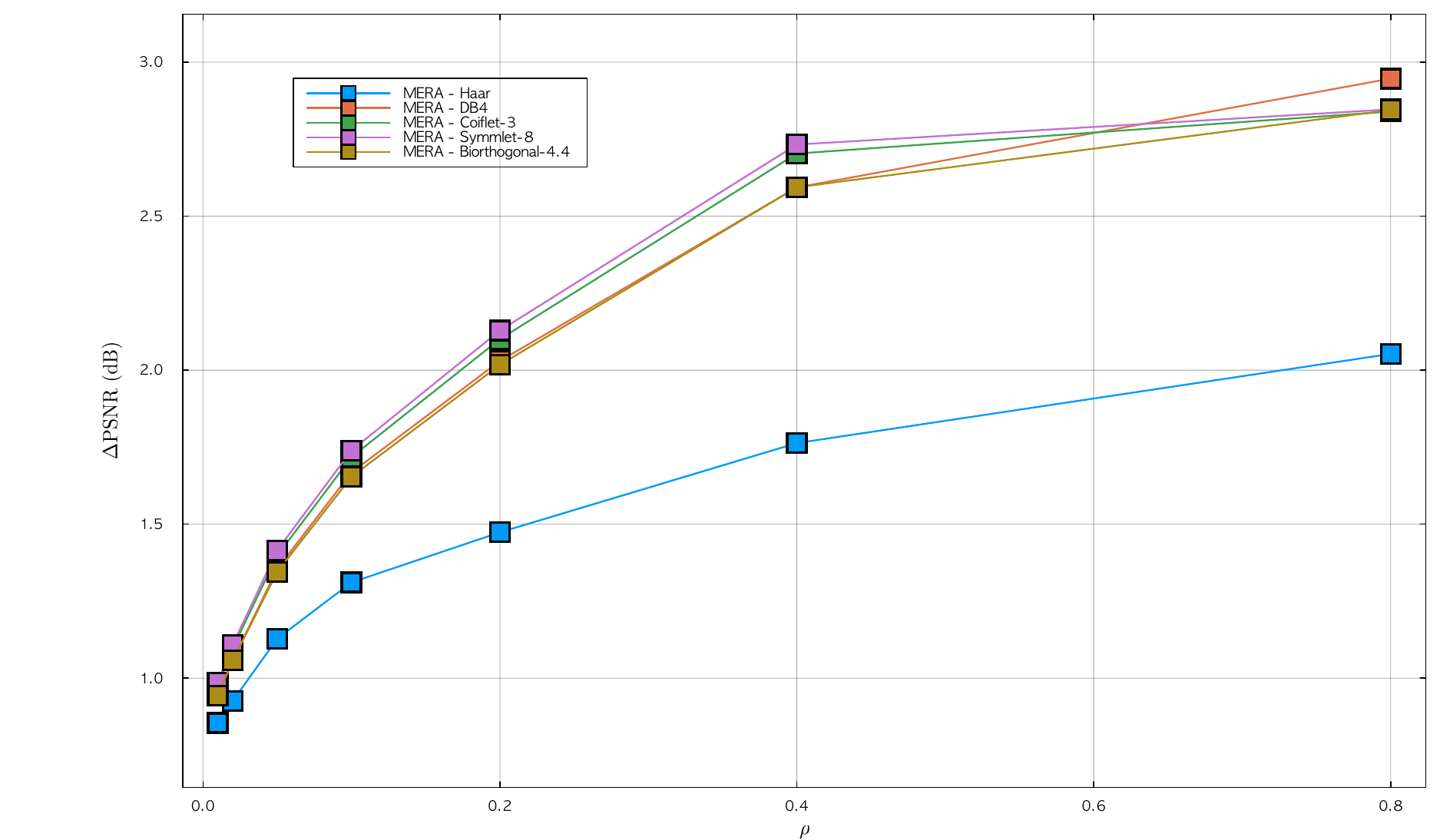}
		\caption{2022 (769 Mbps, $H{=}0.93$)}
		\label{fig:psnr_2022}
	\end{subfigure}
	\begin{subfigure}[t]{0.48\textwidth}
		\includegraphics[width=\linewidth]{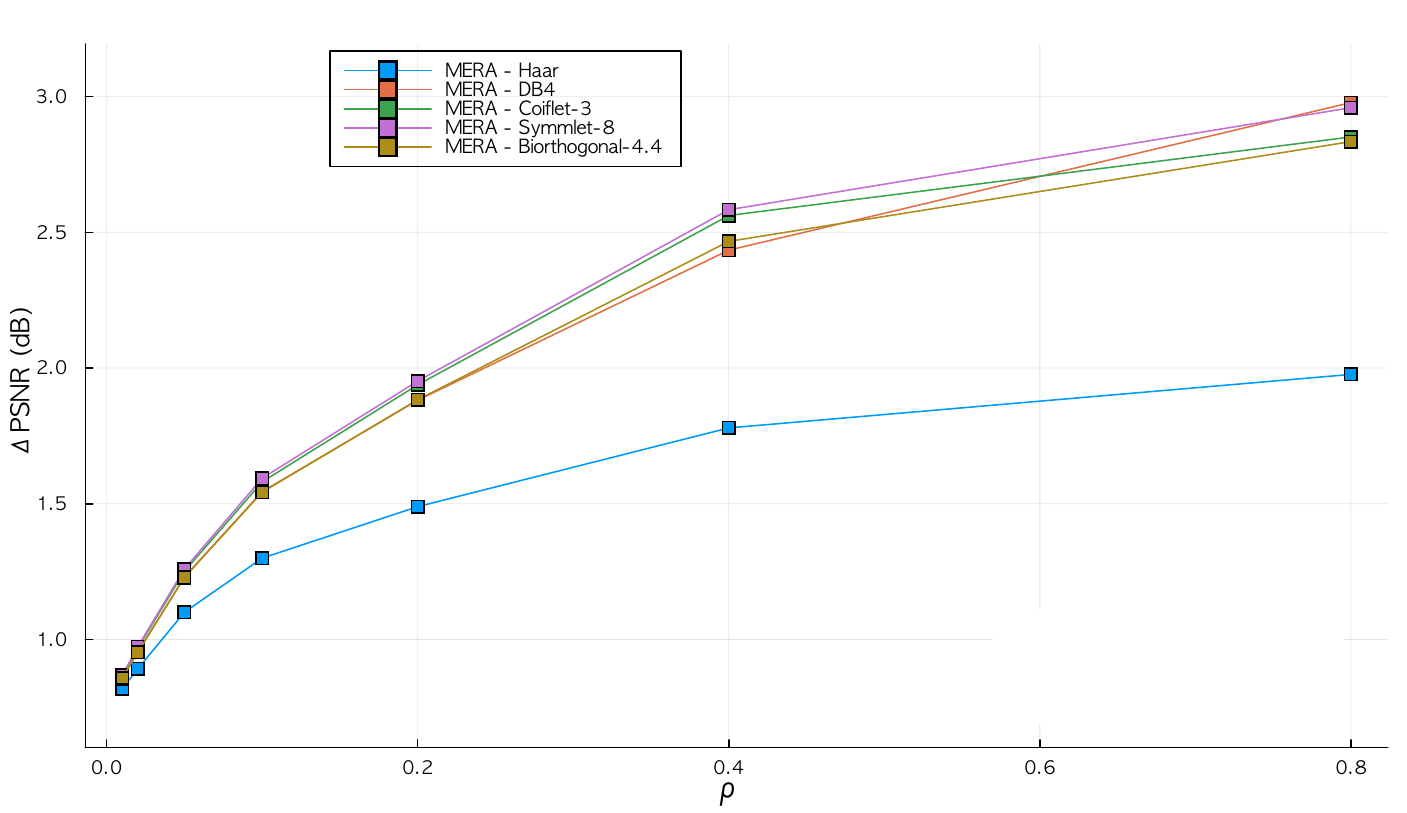}
		\caption{2023 (776 Mbps, $H{=}0.86$)}
		\label{fig:psnr_2023}
	\end{subfigure}\hfill
	
	\vspace{0.2cm}
	\begin{subfigure}[t]{0.48\textwidth}
		\includegraphics[width=\linewidth]{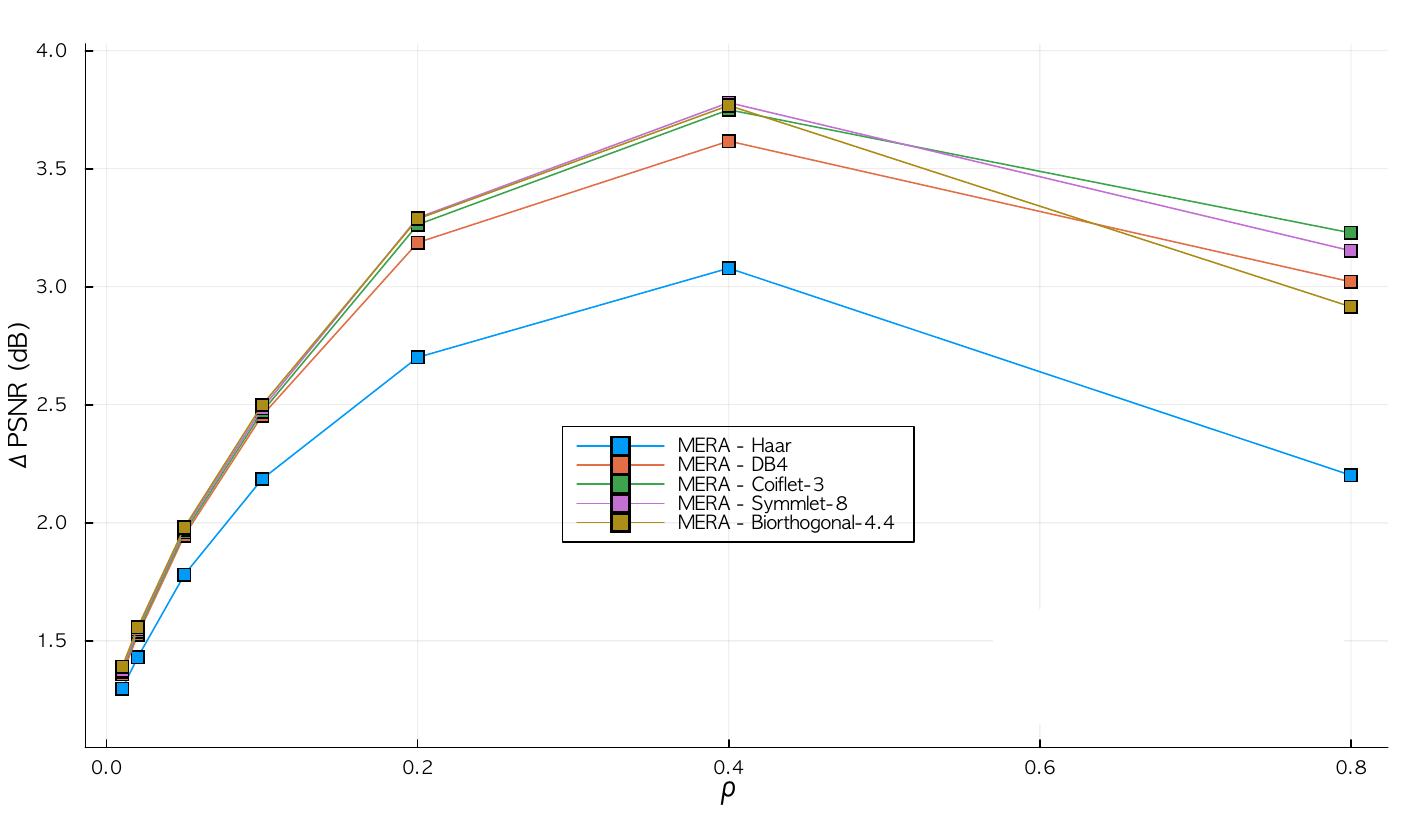}
		\caption{2024 (1.75 Gbps, $H{=}0.88$)}
		\label{fig:psnr_2024}
	\end{subfigure}
	\begin{subfigure}[t]{0.48\textwidth}
		\includegraphics[width=\linewidth]{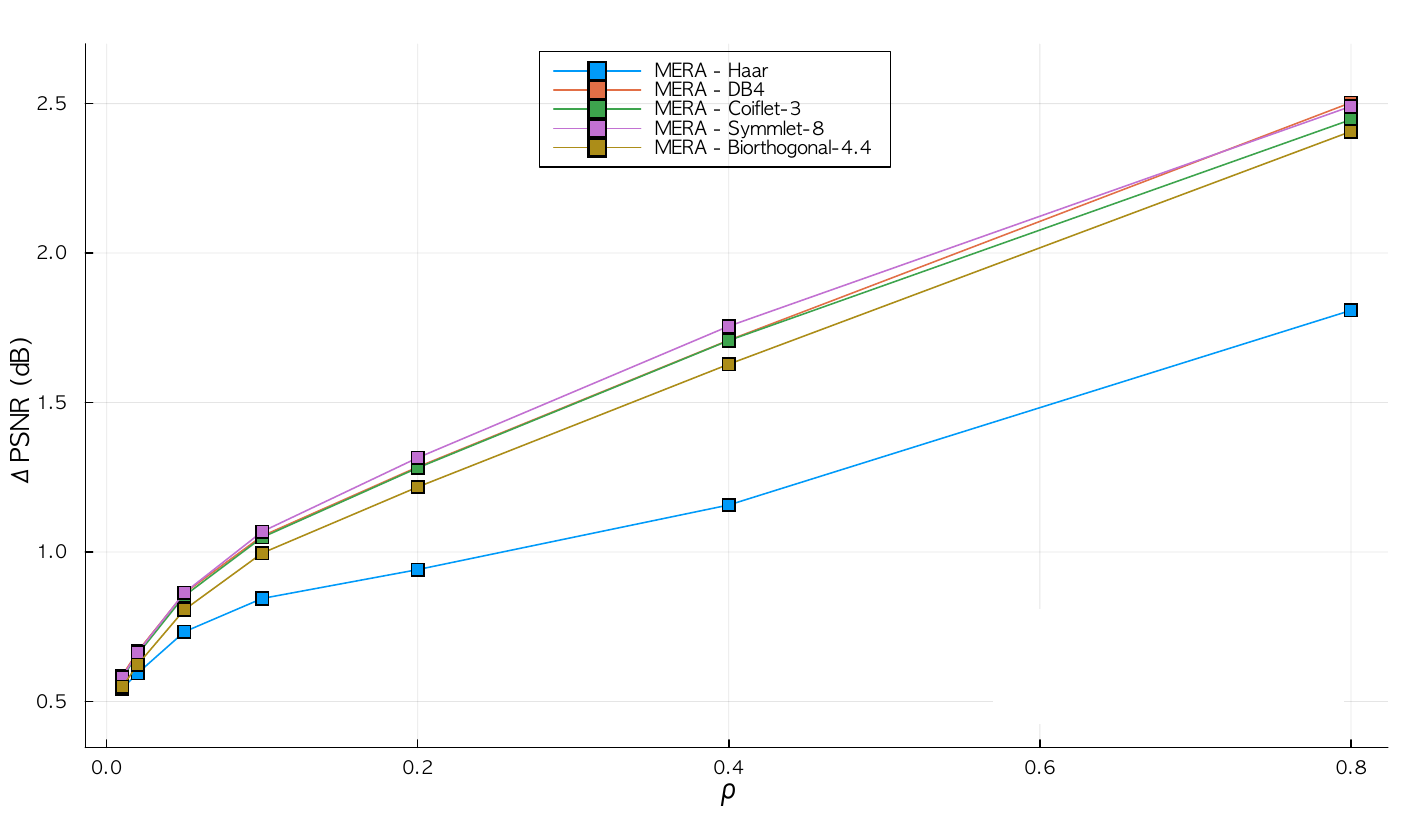}
		\caption{2025 (885 Mbps, $H{=}0.83$)}
		\label{fig:psnr_2025}
	\end{subfigure}
	
\caption{PSNR gains of MERA-learned wavelets over fixed baselines as a function of retention ratio $\rho$. A retention ratio of $\rho = 0.1$ corresponds to 90\% compression (retaining only 10\% of coefficients by magnitude). The learned filters consistently outperform classical wavelets across all compression levels and traffic conditions.}
\label{fig:psnr_all_traces}
\end{figure*}

\subsubsection{Training Configuration}
Experiments employ a two-stage training schedule optimizing MERA-wavelet parameters on 1024-sample non-overlapping windows. The optimization utilizes the Adam solver with a sparsity-driven objective ($\lambda_{\mathrm{sparse}} = 1.0, \lambda_{\mathrm{MSE}} = 0$), ensuring that the learned filters prioritize energy compaction into approximation coefficients. Table~\ref{tab:hyperparameters} details the complete hyperparameter configuration.

\begin{table*}[h]
	\centering
	\caption{MERA-Wavelet training hyperparameters for all experiments (Section~\ref{sec:experiments}).}
	\label{tab:hyperparameters}
	\begin{tabular}{llp{7cm}}
		\hline
		\textbf{Parameter} & \textbf{Value} & \textbf{Justification} \\
		\hline
		\multicolumn{3}{l}{\textit{Architecture}} \\
		Decomposition levels & $L=5$ & Captures scales $2^1$--$2^5$ (2--32 ms) \\
		Initialization & Haar warm-start & Leverages wavelet prior \\
		\hline
		\multicolumn{3}{l}{\textit{Optimization}} \\
		Total iterations & 100 (50 + 50) & Two-stage schedule \\
		Stage 1 learning rate & $\eta_1 = 5 \times 10^{-3}$ & Coarse adaptation \\
		Stage 2 learning rate & $\eta_2 = 2.5 \times 10^{-3}$ & Fine-tuning (halved $\eta_1$) \\
		Adam parameters & $\beta_1{=}0.9$, $\beta_2{=}0.999$ & Standard defaults \\
		Adam epsilon & $\epsilon = 10^{-8}$ & Numerical stability \\
		\hline
		\multicolumn{3}{l}{\textit{Loss function}} \\
		Sparsity weight & $\lambda_{\mathrm{sparse}} = 1.0$ & $\ell_1$ penalty on detail coefficients \\
		MSE weight & $\lambda_{\mathrm{MSE}} = 0.0$ & Disabled (no improvement observed) \\
		\hline
		\multicolumn{3}{l}{\textit{Data processing}} \\
		Window size & 1024 samples & Power-of-two for dyadic DWT \\
		Window stride & 1024 samples & Non-overlapping windows \\
		Retention ratios & $\rho \in \{0.01, \dots, 0.80\}$ & Rate-distortion evaluation \\
		\hline
		\multicolumn{3}{l}{\textit{Implementation}} \\
		Random seed & 12345 & Reproducibility \\
		Parametrization & MERA (polar proj.) & Algorithm \ref{alg:mera-opt}, Section~\ref{sec:learning-framework} \\
		Hardware & Apple M3 Pro & CPU-only execution \\
		\hline
	\end{tabular}
\end{table*}

\textit{Reproducibility:} All experiments ran on a single Apple M3 Pro laptop using Julia 1.11 with CPU-only execution. Random seed 12345 ensures deterministic initialization. The complete codebase, including hyperparameter configuration files, training scripts, and learned filters, is available at \url{https://github.com/alexandreblima/MERA-wavelets}.

\subsubsection{Baselines}
Performance is compared against fixed orthonormal wavelets: Haar (length-2), Daubechies-4 (\texttt{db4}), Coiflet-3, Symmlet-8, and Biorthogonal 4.4. These baselines allow us to isolate the benefits of data-driven adaptivity under strict paraunitary constraints.

\subsubsection{Evaluation Metrics}
\label{subsec:metrics}
Reconstruction fidelity is quantified using PSNR:
\begin{equation}
	\label{eq:psnr}
	\text{PSNR} = 10 \cdot \log_{10} \left( \frac{\text{MAX}_I^2}{\text{MSE}} \right), \quad \text{MSE} = \frac{1}{N} \sum_{i=1}^{N} (x_i - \hat{x}_i)^2
\end{equation}
where $\text{MAX}_I$ is the peak magnitude of the window. Statistical fidelity is assessed by the preservation of the Hurst exponent ($H$), estimated via Abry--Veitch wavelet regression~\cite{Abry1998}. The error metric is $\Delta H = H_{\text{compressed}} - H_{\text{orig}}$.

\subsection{Compression Performance}
\label{sec:compression_results}

After training, the learned filters are evaluated under varying bandwidth constraints by retaining only a fraction $\rho \in (0,1]$ of the wavelet coefficients ranked by magnitude. Specifically, given the full coefficient vector $\mathbf{c} = [\mathbf{a}^{(L)}, \mathbf{d}^{(L)}, \ldots, \mathbf{d}^{(1)}]$, the compressed representation retains the $\lceil \rho \cdot |\mathbf{c}| \rceil$ coefficients with largest absolute values, setting the remainder to zero. Reconstruction is then performed via the inverse MERA transform $\mathcal{S}_\theta$. 

This coefficient-thresholding approach follows standard practice in wavelet compression~\cite{Mallat2009} and enables direct comparison across retention ratios. Note that $\rho$ is an \emph{evaluation parameter} -- it does not appear in the training objective~\eqref{eq:loss_variational}. The learned filters are optimized for general sparsity (minimizing detail coefficient magnitudes), and the retention ratio is varied at test time to characterize rate-distortion performance across different compression levels.

To facilitate direct comparison with fixed wavelet baselines, performance is reported in terms of $\Delta\mathrm{PSNR}$, defined as
\begin{equation}
	\Delta\mathrm{PSNR}(\rho) \triangleq \mathrm{PSNR}_{\text{MERA}}(\rho) - \mathrm{PSNR}_{\text{baseline}}(\rho).
\end{equation}

Fig.~\ref{fig:psnr_all_traces} presents the Rate-Distortion performance for all six MAWI traces. The proposed MERA-inspired wavelet framework consistently outperforms fixed baselines across the full range of retention ratios ($\rho$).

\paragraph{Rate-Distortion Analysis}
The learned filters achieve PSNR gains ranging from $0.5$~dB to $3.8$~dB compared to the best fixed alternative.
\begin{itemize}
	\item \textbf{Peak Performance (2024):} The largest gains are observed in the 2024 trace (Fig.~\ref{fig:psnr_2024}), where MERA achieves a $3.8$~dB improvement over Coiflet-3, Symmlet-8, and Biorthogonal-4.4 . This trace corresponds to the highest network load (1.75 Gbps) and strong LRD ($H \approx 0.88$), indicating that adaptive filters effectively capture the bursty dynamics of saturated links.
	\item \textbf{Convergence of Baselines:} Higher-order fixed wavelets (\texttt{db4}, Coiflet, Symmlet) tend to cluster within a narrow performance band ($< 0.3$~dB difference). MERA breaks this ceiling, demonstrating that optimizing the spectral tilt of the filter bank yields benefits beyond simply increasing the number of vanishing moments.
	\item \textbf{Haar Comparison:} While Haar performs robustly due to its short support, MERA consistently surpasses it by $0.6$--$3.1$~dB, proving that the learned filters successfully balance time-domain localization with frequency selectivity.
\end{itemize}


\subsection{Statistical Fidelity (LRD Preservation)}
\label{sec:statistical_fidelity}

Beyond pointwise error (MSE), 6G DT will require the preservation of the self-similar traffic structure. Table~\ref{tab:hurst_global} lists the reference Global Hurst Exponents ($H$) for the raw traces, confirming persistent LRD ($0.77\le H \le 0.93$) across all years.

\begin{table}[h!]
	\centering
	\caption{Global Hurst exponent estimates from MAWI traces
		(Abry--Veitch regression, 95\% confidence interval).}
	\label{tab:hurst_global}
	\begin{tabular}{lcc}
		\toprule
		\textbf{Trace (MAWI)} & \textbf{$\hat H$} & \textbf{95\% CI} \\
		\midrule
		202004081229 & 0.8897 & [0.853, 0.926] \\
		202103181400 & 0.7674 & [0.684, 0.851] \\
		202204131100 & 0.9313 & [0.883, 0.979] \\
		202301131400 & 0.8641 & [0.817, 0.911] \\
		202406192000 & 0.8771 & [0.825, 0.929] \\
		202504090300 & 0.8329 & [0.787, 0.878] \\
		\bottomrule
	\end{tabular}
\end{table}

\paragraph{Hurst Exponent Preservation}
Table~\ref{tab:hurst_preservation} reports the deviation $\Delta H$ in the reconstructed signal. At a retention ratio of $\rho=0.1$ (90\% compression), the method maintains $|\Delta H| \le 0.03$ for all traces. This demonstrates that the learned basis functions preserve the power-law decay of the autocorrelation function even at high compression rates.

Importantly, increasing the retention factor $\rho$ does not necessarily improve the stability of the Hurst exponent. While larger $\rho$ preserves more coefficients, it also reintroduces small-amplitude detail components primarily associated with high-frequency fluctuations. Since Hurst exponent estimation depends on the stability of multiscale scaling behavior rather than local reconstruction fidelity, these weak high-frequency contributions may increase estimator sensitivity and perturb the slope of the wavelet logscale diagram. Conversely, moderate sparsification suppresses such weak detail coefficients, effectively acting as a structural denoising mechanism that stabilizes scaling statistics. Therefore, the observed variations of $\Delta H$ with $\rho$ reflect estimator sensitivity rather than degradation of the reconstructed signal.

The threshold $|\Delta H| \le 0.03$ was selected based on the statistical precision of the estimator. As derived from the 95\% confidence intervals for the raw traces (Table~\ref{tab:hurst_global}), the intrinsic uncertainty of the Abry--Veitch estimator for these finite-length windows ranges from $\pm 0.036$ (Trace 2020) to $\pm 0.083$ (Trace 2021). Even for the critical high-load scenario (Trace 2024), the measurement error is $\approx \pm 0.052$. Consequently, maintaining compression deviations within $0.03$ ensures that the LRD structure of the reconstructed telemetry remains statistically indistinguishable from the original source, preserving the validity of the data for queueing analysis within the limits of measurement precision.

The spectral analysis in Fig.~\ref{fig:wavelet_spectra} corroborates this, showing that the energy distribution across scales $\ell$ maintains linearity. While deviations occur at very large scales ($\ell > 15$) due to finite-size effects and non-stationarity, the primary scaling region essential for LRD modeling is preserved.

\begin{table}[!htbp]
	\centering
	\caption{Hurst exponent deviations across compression levels (learned MERA).
		Values with $|\Delta H| \leq 0.03$ are shown in \textbf{bold}, indicating strong preservation of LRD.}
	\label{tab:hurst_preservation}
	\small
	\begin{tabular}{@{}lccccc@{}}
		\toprule
		Trace & $H_{\text{orig}}$
		& $\Delta H(\rho{=}0.1)$ 
		& $\Delta H(\rho{=}0.2)$ 
		& $\Delta H(\rho{=}0.4)$ 
		& $\Delta H(\rho{=}0.8)$ \\
		\midrule
		2020 & 0.890 
		& \textbf{+0.027} & \textbf{-0.011} & -0.038 & -0.049 \\
		2021 & 0.767 
		& \textbf{+0.011} & \textbf{-0.017} & -0.039 & -0.051 \\
		2022 & 0.931 
		& \textbf{+0.002} & \textbf{-0.028} & -0.055 & -0.064 \\
		2023 & 0.864 
		& \textbf{+0.020} & \textbf{-0.012} & -0.039 & -0.048 \\
		2024 & 0.877 
		& \textbf{-0.010} & -0.041 & -0.065 & -0.079 \\
		2025 & 0.833 
		& \textbf{-0.004} & -0.046 & -0.078 & -0.086 \\
		\midrule
		Mean & 0.846 
		& \textbf{+0.009} & \textbf{-0.026} & -0.052 & -0.063 \\
		\bottomrule
	\end{tabular}
\end{table}

\begin{figure*}[!ht]
	\centering
	\includegraphics[width=0.9\linewidth]{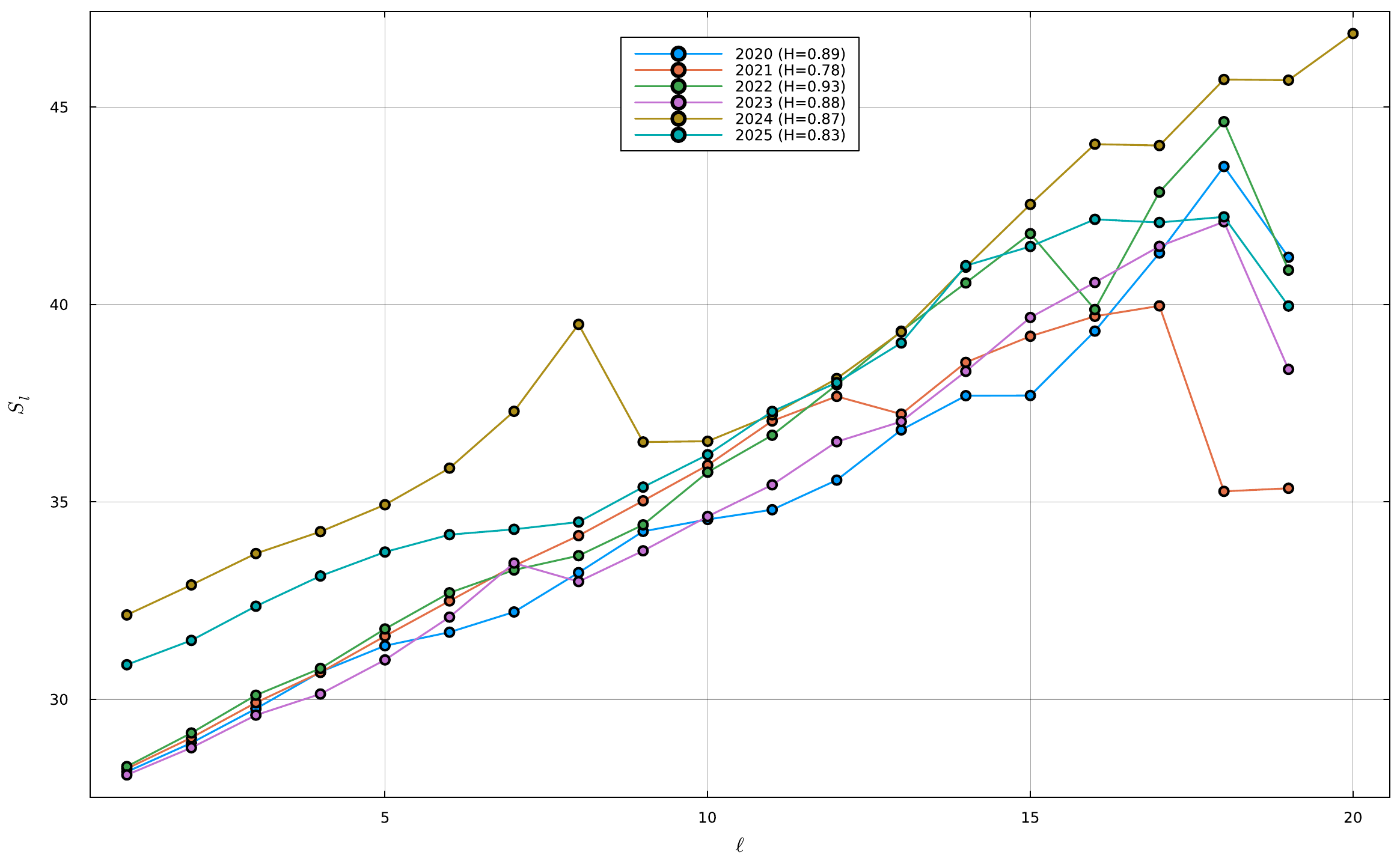}
	\caption{Wavelet energy spectra $S_\ell$ vs. scale $\ell$ for MAWI traces. The linear growth confirms power-law scaling (LRD). MERA filters are optimized to match this spectral tilt.}
	\label{fig:wavelet_spectra}
\end{figure*}

\subsection{Learned Filter Analysis}
\label{sec:filter_analysis}

To understand the adaptation mechanism, we examine the filters learned from the 2024 trace (Fig.~\ref{fig:learned_filters}). Starting from a Haar initialization, the optimization converges to an asymmetric structure (Table~\ref{tab:learned_coeffs}) that increases the support of the basis functions.

The frequency response reveals that the learned filters introduce specific passband ripples that deviate from the ``maximum flatness'' criteria of Daubechies wavelets. These deviations are not artifacts but data-driven adaptations that maximize energy compaction for the specific spectral signature of internet traffic, validating the use of the MERA framework for discovering domain-specific orthogonal bases.


\begin{figure*}[!ht]
	\centering
	\includegraphics[width=0.9\textwidth]{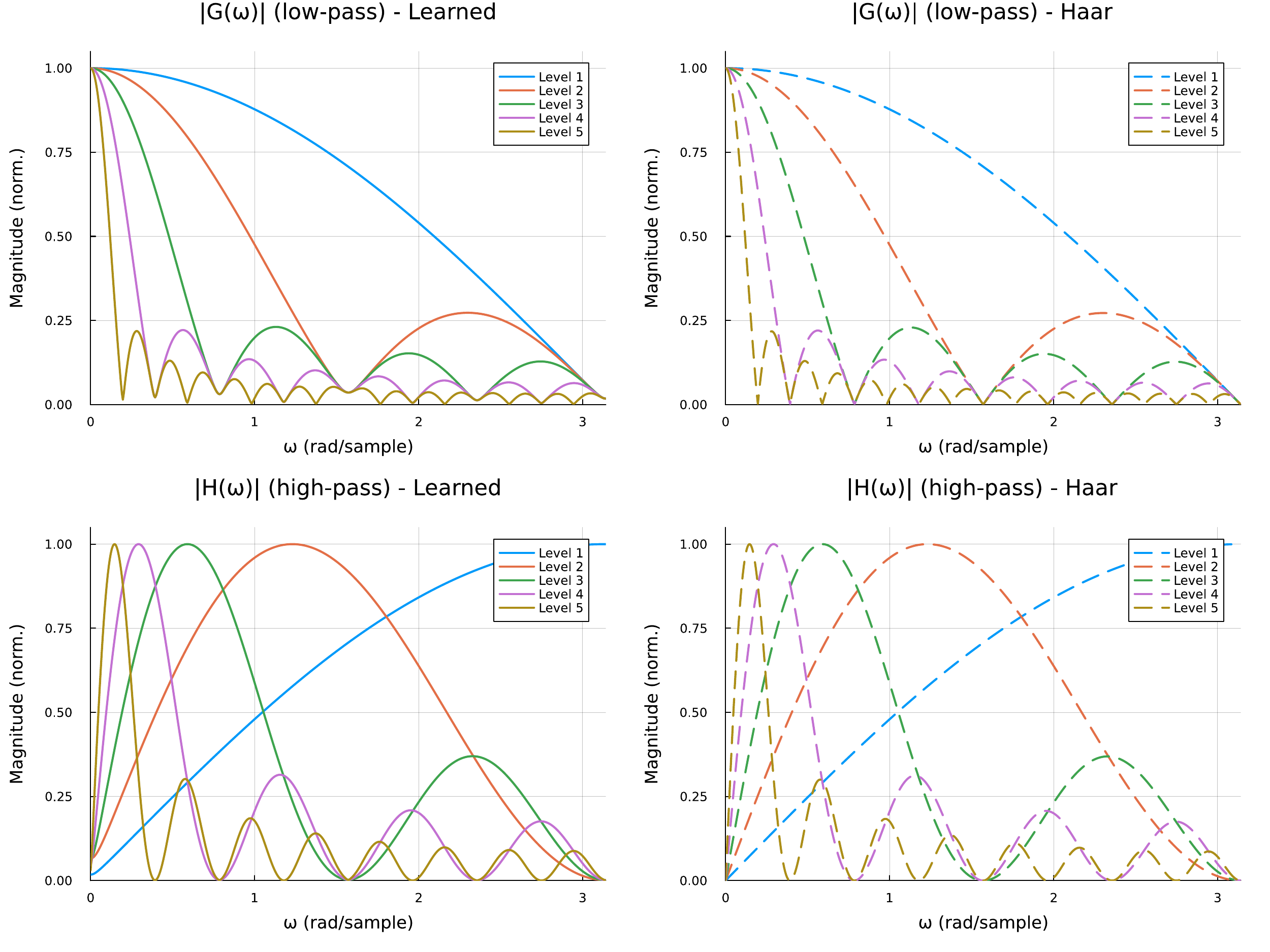}
	\caption{Frequency response of analysis filters (Trace~2024). \textbf{Left:} learned paraunitary filters. \textbf{Right:} Haar initialization. 
		\textbf{Top:} low-pass cascades $G_\ell(\omega)$. \textbf{Bottom:} high-pass filters $H_\ell(\omega)$. 
		Haar itself exhibits ripple due to the length‑2 basis. Both learned cascades keep this ripple structure; however, the low-pass shift their amplitude and zero locations slightly across levels, especially in the pass‑band and transition regions.}
	\label{fig:learned_filters}	
\end{figure*}

\begin{table*}[!ht]
	\centering
	\caption{Learned filter coefficients for trace \texttt{202406192000}. Deviations from Haar (length-2) indicate adaptation to traffic structure.}
	\label{tab:learned_coeffs}
	\small
	\begin{tabular}{@{}ccccc@{}}
		\toprule
		Level $\ell$ & $\|g_\ell - g_{\text{Haar}}\|_2$ & $g_\ell$ (low-pass) & $h_\ell$ (high-pass) \\
		\midrule
		1 & 0.0177 & $[0.7195,\,0.6945]^\top$ & $[\phantom{-}0.6945, -0.7195]^\top$ \\
		2 & 0.0505 & $[0.7419,\,0.6705]^\top$ & $[\phantom{-}0.6705, -0.7419]^\top$ \\
		3 & 0.0473 & $[0.7398,\,0.6729]^\top$ & $[\phantom{-}0.6729, -0.7398]^\top$ \\
		4 & 0.0331 & $[0.6833,\,0.7301]^\top$ & $[\phantom{-}0.7301, -0.6833]^\top$ \\
		5 & 0.0222 & $[0.7226,\,0.6913]^\top$ & $[\phantom{-}0.6913, -0.7226]^\top$ \\
		\bottomrule
	\end{tabular}
\end{table*}

\paragraph*{Summary}
The experimental validation confirms that the proposed MERA-wavelet framework effectively bridges the gap between theoretical orthogonality and data-driven adaptation. The key takeaways for 6G DT implementations are:

\begin{itemize}
	\item \textbf{Rate-Distortion Superiority:} The learned filters achieve consistent PSNR gains of $0.5$--$3.8$\,dB over standard wavelet families. The advantage is most pronounced in high-load, bursty scenarios (e.g., the 2024 trace with 1.75 Gbps), confirming that adaptive bases successfully capture the non-stationary dynamics of modern backbone traffic.
	
	\item \textbf{Statistical Preservation:} Crucially for predictive modeling, the method preserves the self-similar nature of the traffic. The Hurst exponent deviations remain negligible ($|\Delta H| \le 0.03$) even at 90\% compression ($\rho=0.1$), ensuring that the reconstructed telemetry retains the correlation structure necessary for accurate network simulation.
	
	\item \textbf{Structural Guarantees:} Unlike unconstrained deep learning approaches, the MERA-based optimization converges to interpretable, perfectly reconstructing filter banks. The results demonstrate that strict paraunitary constraints can be maintained without sacrificing the flexibility required to adapt to diverse spectral signatures.
\end{itemize}

These findings position the MERA-wavelet not merely as a compression tool, but as a reliable interface for high-fidelity data synchronization in 6G architectures.


\section{Conclusion}
\label{sec:conclusion}

This work addressed a central challenge in the design of DT for 6G networks: achieving high-fidelity telemetry compression while preserving the strict structural guarantees required for reliable closed-loop operation. Instead of treating compression as a generic rate -- distortion problem, the proposed approach framed telemetry as a synchronization mechanism, where violations of invertibility, energy conservation, or LRD directly compromise the predictive stability of the DT.

A rigorous and exact equivalence between MERA tensor networks and two-channel paraunitary wavelet filter banks was established, enabling a learning framework that overcomes the limitations of fixed wavelet designs. Experimental validation on real-world backbone aggregation traces demonstrated consistent rate -- distortion gains of up to 3.8~dB over classical orthogonal and biorthogonal wavelets, while preserving the self-similar structure of traffic within strict Hurst-exponent bounds. These improvements were obtained without relaxing paraunitary constraints, ensuring PR and Parseval energy conservation at all scales.

Beyond compression performance, the results position the MERA-wavelet framework as a principled synchronization interface between physical networks and their DT. By preserving the multiscale statistical invariants that underpin traffic modeling, the proposed method provides a technology-agnostic foundation for telemetry pipelines in bandwidth-constrained 6G architectures. Extensions to wireless and edge environments, where mobility and radio-induced nonstationarity introduce additional challenges, constitute a natural direction for future investigation.

\appendix

\section{Proof of Theorem~\ref{thm:circuit_pu} (Architectural Equivalence)}
\label{appendix:proof_equivalence}
	
\textbf{Theorem~\ref{thm:circuit_pu} (Architectural Equivalence).}
A MERA-inspired layer (Definition~\ref{def:mera_layer}) is equivalent to a two-channel paraunitary filter bank whose polyphase representation is a constant orthonormal matrix $\mathbf{E}(z) \equiv U_\ell$:
\begin{equation}
	\begin{bmatrix} A(z) \\ D(z) \end{bmatrix}
	=
	\begin{bmatrix} g_0 & g_1 \\ h_0 & h_1 \end{bmatrix}
	\begin{bmatrix} X_0(z) \\ X_1(z) \end{bmatrix}.
\end{equation}
	
\begin{proof}[Proof of Theorem~\ref{thm:circuit_pu}]
Both directions of the equivalence are derived.
		
		
\textbf{(Sufficiency)}  
Assume a MERA-inspired local operator $U_\ell \in \mathcal{O}(2)$ acts pointwise on adjacent pairs of samples,
\begin{equation}
		\begin{bmatrix} a_k \\ d_k \end{bmatrix} 
		= U_\ell 
		\begin{bmatrix} x_{2k} \\ x_{2k+1} \end{bmatrix} 
		= 
		\begin{bmatrix} g_0 & g_1 \\ h_0 & h_1 \end{bmatrix}
		\begin{bmatrix} x_{2k} \\ x_{2k+1} \end{bmatrix},
		\quad k \in \mathbb{Z}.
		\label{eq:mera_time}
\end{equation}
It is shown next that this local transformation induces a paraunitary filter bank.
		
\textbf{Step 1 (Time-domain output):}
Expanding \eqref{eq:mera_time} yields the component-wise relations
	\begin{equation}
		a_k = g_0 x_{2k} + g_1 x_{2k+1}, 
		\qquad 
		d_k = h_0 x_{2k} + h_1 x_{2k+1}.
	\end{equation}
		
\textbf{Step 2 (Polyphase decomposition):}
Define the decimated z-transforms of the outputs
\begin{equation}
		A(z) = \sum_k a_k z^{-k}, 
		\qquad
		D(z) = \sum_k d_k z^{-k},
\end{equation}
and the even/odd polyphase components of the input
\begin{equation}
		X_0(z) = \sum_k x_{2k}z^{-k}, 
		\qquad 
		X_1(z) = \sum_k x_{2k+1}z^{-k}.
\end{equation}
		
\textbf{Step 3 (Z-transform substitution):}
Substituting the expressions for $a_k$ and $d_k$ from Step 1 into the z-transforms gives
\begin{align}
		A(z) &= g_0 X_0(z) + g_1 X_1(z), \\
		D(z) &= h_0 X_0(z) + h_1 X_1(z),
\end{align}
which can be written compactly in matrix form as
\begin{equation}
		\begin{bmatrix} A(z) \\ D(z) \end{bmatrix} 
		= 
		\begin{bmatrix} g_0 & g_1 \\ h_0 & h_1 \end{bmatrix}
		\begin{bmatrix} X_0(z) \\ X_1(z) \end{bmatrix}
		= \mathbf{E}(z) \begin{bmatrix} X_0(z) \\ X_1(z) \end{bmatrix},
\end{equation}
where $\mathbf{E}(z) \equiv U_\ell$ is the constant polyphase matrix.
		
\textbf{Step 4 (Two-tap FIR filters):}
Applying \eqref{eq:poly1}--\eqref{eq:poly2}, the analysis filters are
\begin{equation}
\begin{aligned}
		G(z) &= E_{00}(z^2) + z^{-1}E_{01}(z^2), \\
		H(z) &= E_{10}(z^2) + z^{-1}E_{11}(z^2).
\end{aligned}
\label{eq1:polyphase-step4}
\end{equation}
Since $\mathbf{E}(z)\equiv U_\ell$ is constant (z-independent), substituting the scalar entries  $E_{00}=g_0$, $E_{01}=g_1$, $E_{10}=h_0$, $E_{11}=h_1$ yields
\begin{equation}
		G(z) = g_0 + g_1 z^{-1}, 
		\qquad 
		H(z) = h_0 + h_1 z^{-1},
		\label{eq:fir_filters}
\end{equation}
which are length-2 FIR analysis filters parameterized by the entries of $U_\ell$.
		
\textbf{Step 5 (Paraunitarity):}
The orthogonality condition $U_\ell \in \mathcal{O}(2)$ directly implies paraunitarity of the polyphase matrix:
\begin{equation}
		\mathbf{E}(z)\,\mathbf{E}^{\adj}(z^{-1}) 
		= 
		U_\ell\,{U_\ell}^{\adj} 
		= I.
\end{equation}
This ensures power complementarity in the frequency domain: $|G(\omega)|^2 + |H(\omega)|^2 = 2$.
		
\textbf{Step 6 (Perfect reconstruction):}
Choosing the synthesis polyphase matrix $\mathbf{R}(z) = \mathbf{E}^{\adj}(z^{-1}) = {U_\ell}^{\adj}$ ensures  $\mathbf{R}(z)\mathbf{E}(z) = I$, guaranteeing alias cancellation. In the time domain, this yields
\begin{equation}
		{U_\ell}^{\adj}\!\left(
		U_\ell 
		\begin{bmatrix} x_{2k} \\ x_{2k+1} \end{bmatrix}
		\right)
		=
		({U_\ell}^{\adj}U_\ell)
		\begin{bmatrix} x_{2k} \\ x_{2k+1} \end{bmatrix}
		=
		\begin{bmatrix} x_{2k} \\ x_{2k+1} \end{bmatrix},
\end{equation}
confirming perfect reconstruction of the input samples.
		
\textbf{Conclusion:}
A MERA layer with $U_\ell\in\mathcal{O}(2)$ induces a critically sampled, two-channel, two-tap paraunitary filter bank with constant polyphase matrix $\mathbf{E}(z)\equiv U_\ell$, inheriting guarantees such as perfect reconstruction, energy conservation (Parseval identity), and $O(N)$ complexity. \qed
		
		
\medskip
\textbf{(Necessity)}  
Suppose now that the analysis stage of a two-channel paraunitary filter bank has a constant polyphase matrix
\begin{equation}
		\mathbf{E}(z) \equiv U = 
		\begin{bmatrix} g_0 & g_1 \\ h_0 & h_1 \end{bmatrix}, 
		\qquad
		U\,U^{\adj} = I.
\end{equation}
It is established next that this filter bank necessarily implements a MERA-inspired layer.
		
\textbf{Step 1 (Polyphase representation):} 
By the polyphase decomposition of the two-channel analysis bank, the output transforms are
\begin{equation}
		\begin{bmatrix} A(z) \\ D(z) \end{bmatrix} 
		= 
		\mathbf{E}(z) \begin{bmatrix} X_0(z) \\ X_1(z) \end{bmatrix},
\end{equation}
where $X_0(z) = \sum_k x_{2k} z^{-k}$ and $X_1(z) = \sum_k x_{2k+1} z^{-k}$ are the even and odd polyphase components of the input signal.
		
\textbf{Step 2 (Time-domain relation):} 
Since $\mathbf{E}(z) \equiv U$ is constant (z-independent), all polyphase entries are scalars. Matching coefficients in the z-transform yields the time-domain relation
\begin{equation}
		\begin{bmatrix} a_k \\ d_k \end{bmatrix}
		= 
		U \begin{bmatrix} x_{2k} \\ x_{2k+1} \end{bmatrix}, 
		\qquad k \in \mathbb{Z}.
\end{equation}
This shows that the analysis operation applies the \emph{same} matrix $U$ to each pair of adjacent samples $(x_{2k}, x_{2k+1})$ independently, followed by implicit downsampling.
		
\textbf{Step 3 (Equivalence to MERA layer):} 
The pairwise transformation in the previous step is precisely the definition of a MERA-inspired layer (Definition~\ref{def:mera_layer}):
\begin{equation}
		\begin{bmatrix} a_k \\ d_k \end{bmatrix}
		= 
		U_\ell \begin{bmatrix} x_{2k} \\ x_{2k+1} \end{bmatrix}.
\end{equation}
Thus, the filter bank analysis coincides exactly with the action of a MERA layer using $U_\ell = U$.
		
\textbf{Step 4 (Paraunitarity verification):} 
The paraunitarity condition $\mathbf{E}(z)\mathbf{E}^{\adj}(z^{-1}) = I$ reduces to $U U^{\adj} = I$ for constant $\mathbf{E}(z) \equiv U$, confirming that $U \in \mathcal{O}(2)$. This ensures perfect reconstruction via $U^{\adj}$ and energy conservation (Parseval identity).
		
\textbf{Step 5 (Two-tap FIR structure):} 
Applying \eqref{eq:poly1}--\eqref{eq:poly2} to the constant polyphase matrix yields the analysis filters
\begin{equation}
		G(z) = E_{00}(z^2) + z^{-1}E_{01}(z^2) = g_0 + g_1 z^{-1}, 
\end{equation}
\begin{equation}
		H(z) = E_{10}(z^2) + z^{-1}E_{11}(z^2) = h_0 + h_1 z^{-1},
\end{equation}
which are two-tap FIR filters parameterized by the entries of $U$.
		
\textbf{Conclusion:} 
Any two-channel paraunitary filter bank with constant polyphase matrix $\mathbf{E}(z) \equiv U \in \mathcal{O}(2)$ necessarily implements a MERA-inspired layer with two-tap FIR analysis filters. This completes the proof of equivalence. \qed

\end{proof}
	
	
\section{Proof of Corollary~\ref{cor:qmf} (Uniqueness of Haar for Two-Tap QMF)}
\label{appendix:proof_qmf}
	
\begin{proof}[Proof of Corollary~\ref{cor:qmf}]
It is shown that the Haar wavelet is the unique real two-tap FIR filter bank satisfying both PR and QMF paraunitarity.
		
\textbf{Step 1 (Orthogonality):}
The paraunitarity condition $U_\ell U_\ell^{\adj} = I$ applied to \eqref{eq:qmf_polyphase} yields
\begin{equation}
\begin{bmatrix} 
		g_0 & g_1 \\ 
		g_1 & -g_0 
\end{bmatrix}
\begin{bmatrix} 
		g_0 & g_1 \\ 
		g_1 & -g_0 
\end{bmatrix}^{\adj}
		=
\begin{bmatrix} 
		g_0^2 + g_1^2 & 0 \\ 
		0 & g_0^2 + g_1^2 
\end{bmatrix}
		= I.
\end{equation}
This immediately gives the normalization constraint
\begin{equation}
		g_0^2 + g_1^2 = 1.
\label{eq:norm_constraint}
\end{equation}
		
\textbf{Step 2 (Parameterization):}
Eq. \eqref{eq:norm_constraint} parameterizes all solutions as points on the unit circle:
\begin{equation}
	   g_0 = \cos\theta, \qquad g_1 = \sin\theta, \qquad \theta \in [0, 2\pi).
\end{equation}
		
\textbf{Step 3 (DC response maximization):}
Among all orthonormal solutions, the Haar wavelet uniquely maximizes the DC response $|G(0)|$:
\begin{equation}
		|G(0)| = |g_0 + g_1| = |\cos\theta + \sin\theta|.
\end{equation}
This is maximized when $\cos\theta = \sin\theta$, i.e., $\theta = \pi/4$, yielding
\begin{equation}
		g_0 = g_1 = \frac{1}{\sqrt{2}},
		\qquad
		|G(0)| = \sqrt{2}.
\end{equation}
		
\textbf{Step 4 (Uniqueness):}
Combining orthogonality \eqref{eq:norm_constraint} with the symmetry requirement $g_0 = g_1$ gives the unique solution
\begin{equation}
		g_0 = g_1 = \frac{1}{\sqrt{2}},
		\qquad
		h_0 = g_1 = \frac{1}{\sqrt{2}},
		\qquad
		h_1 = -g_0 = -\frac{1}{\sqrt{2}},
\end{equation}
corresponding to the Haar filters
\begin{equation}
		G(z) = \frac{1}{\sqrt{2}}(1 + z^{-1}), 
		\qquad
		H(z) = \frac{1}{\sqrt{2}}(1 - z^{-1}),
\end{equation}
with polyphase matrix
\begin{equation}
		U_{\text{Haar}} = \frac{1}{\sqrt{2}}
		\begin{bmatrix} 
			1 & 1 \\ 
			1 & -1 
		\end{bmatrix}.
\end{equation}
		
\textbf{Conclusion:}
Therefore, for two-tap filters, the QMF-paraunitary family forms a one-parameter manifold $\mathcal{M} = \{U(\theta) : \theta \in [0, 2\pi)\}$ with
\[
		U(\theta) = \begin{pmatrix} \cos\theta & \sin\theta \\ 
			\sin\theta & -\cos\theta \end{pmatrix}.
\]
The Haar filter bank corresponds to $\theta = \pi/4$, yielding the computationally simplest coefficients $g_0 = g_1 = 1/\sqrt{2}$ and uniquely maximizing the DC gain $|G(0)| = \sqrt{2}$ among all members of $\mathcal{M}$. This makes Haar the canonical choice for initialization, while the learnable angles $\theta_\ell$ explored in this work span the full QMF-paraunitary family.
\end{proof}
	
\textbf{Remark (Relationship to QMF).}  
For two-tap filters, the QMF-paraunitary family forms a one-parameter manifold within the reflection component of $\mathcal{O}(2)$ (i.e., $\det(U) = -1$). With Haar initialization ($\theta_\ell = \pi/4$) and polar projection, the learned filters remain in this QMF-paraunitary family throughout training. The PSNR gains in Figs.~\ref{fig:psnr_2020}--\ref{fig:psnr_2025} arise from learning optimal rotation angles $\theta_\ell \neq \pi/4$ that better match trace-specific LRD statistics, while preserving both QMF structure and perfect reconstruction. Extending to rotations ($\det = +1$) would exit the QMF family; longer filters ($N > 2$) would enlarge the design space further.

\bibliographystyle{IEEEtran}

\end{document}